\theoremstyle{plain}
 \newtheorem{thm}{Theorem}[section]
 \newtheorem{prop}[thm]{Proposition}
 \newtheorem{lem}[thm]{Lemma}
 \newtheorem{cor}[thm]{Corollary}
\theoremstyle{definition}
 \newtheorem{exm}{Example}[section]
 \newtheorem{dfn}{Definition}[section]
 \newtheorem{nota}{Notation}[section]
\theoremstyle{remark}
 \newtheorem{rem}{Remark}[section]
 \numberwithin{equation}{section}
\renewcommand{\leq}{\leqslant}
\renewcommand{\geq}{\geqslant}
\title{Enumerative Methods in Quantum Electrodynamics}
\begin{document}

\maketitle{}
\vskip 0.4cm
\begin{center}
    Ali Assem Mahmoud\\
National Research Council of Canada, Digital Technologies\\Waterloo, ON, Canada.\\
\href{mailto:ali.mahmoud@nrc-cnrc.gc.ca}{\tt ali.mahmoud@uwaterloo.ca}
\end{center}

\vspace{0.5cm}
\begin{abstract}
We show that observables in QED-type theories can be realized in terms of a combinatorial structure called chord diagrams. One advantage of this combinatorial representation is that it simplifies the study of the asymptotic behavior of corresponding Green functions. Particularly, using the new representation, there is no need to use the standard approach of singularity analysis. This relation also reveals the unexplained correlation between the number of Feynman diagrams in Yukawa theory and the diagrams in quenched QED. 
\end{abstract}

\maketitle



\section{Introduction and Setup }\label{quenchedsec}

In perturbation theory, it is essential to study the behaviour of Green functions at large orders. This however usually depends on the problem of approximating the number of diagrams contributing to the perturbative terms, and, in that regards, it is shown in \cite{numandweights} that it  suffices to consider  the problem as a zero-dimensional field theory. 
In this paper we will establish a strong relation between 1PI Feynman diagrams in certain QED-type theories and a combinatorial structure called \textit{chord diagrams}. Chord diagrams offer more simplicity when studying the respective generating functions at large orders. Moreover, the way these Feynman diagrams are mapped into chord diagrams is cryptic, which should probably mean that there is still more to highlight in this direction. This also gives rise to the question of why these graph-theoretic matchings are sufficient to represent the theory.  The method used here is generic and can be applied to other QFT theories with a cubic-type vertex. We will regularly appeal to theorems about factorially divergent power series (see \cite{michi1} or Appendix A in \cite{alipaper2con}). First we briefly set-up the context in perturbation theory. In most parts we follow the notation in \cite{michiq} as we will need to compare some of the results eventually.

\subsection{Zero-Dimensional Scalar Theories with Interaction}

We are particularly considering quenched QED and Yukawa theory. These are examples of theories with interaction, where the partition function takes the form 

\[Z(\hbar,j):=\int_\mathbb{R} \displaystyle\frac{1}{\sqrt{2\pi\hbar}} e^{\frac{1}{\hbar}\left(-\frac{x^2}{2}+V(x)+xj\right)}dx,\] 
where an additional term is added to the potential, namely $xj$, with $j$ called the \textit{source}.

 The generated Feynman diagrams are labeled, and hence in order to restrict ourselves to connected diagrams we have to take the logarithm of the partition function, commonly known as the \textit{free energy}:
\begin{align*}
    W(\hbar,j):&=\hbar \log 
    Z(\hbar,j),
\end{align*}
which generates all connected diagrams.

As customary in QFT, to move to the \textit{quantum effective action} $G$, which generates \textit{1PI} diagrams, one takes the Legendre transform of $W$:

\begin{align} \label{propergreenfnmichi}
    G(\hbar,\varphi_c):=W-j \varphi_c, 
\end{align}
where $\varphi_c:=\partial_jW$. The coefficients $[\varphi_c^n]G$ are called the \textit{(proper) Green functions} of the theory. 
Recall that from a graph-theoretic point of view, being \textit{1PI} (1-particle irreducible)  is another way of saying $2$-connected. Thus, combinatorially, the Legendre transform, as in \cite{kjm2}, is seen to be the transportation from connected diagrams to $2$-connected or \textit{1PI} diagrams. In that sense, the order of the derivative  $\partial^n_{\varphi_c}G|_{\varphi_c=0}=[\varphi_c^n]G$ determines the number of external legs.

In the next part of the discussion, we shall need the following physical jargon and terminology:

\begin{enumerate}
    \item The Green function $[\varphi_c^1]G=\partial_{\varphi_c}G|_{\varphi_c=0}$ generates all \textit{1PI} diagrams with exactly one external leg, which are called the \textit{tadpoles} of the theory (Figure \ref{tp}).
    
    \begin{figure}[h]
    \centering
    \includegraphics[scale=0.34]{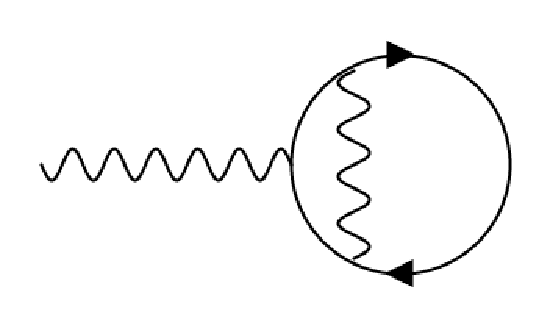}
    \caption{A tadpole diagram in QED}
    \label{tp}\end{figure}
    
    \item The Green function $[\varphi^2_c]G=\partial^2_{\varphi_c}G|_{\varphi_c=0}$ generates all \textit{1PI} diagrams with two external legs. Such a diagram is called a \textit{1PI propagator} (can replace an edge in the theory; Figure \ref{propagator}). 

    \begin{figure}[h]
    \centering
    \includegraphics[scale=0.34]{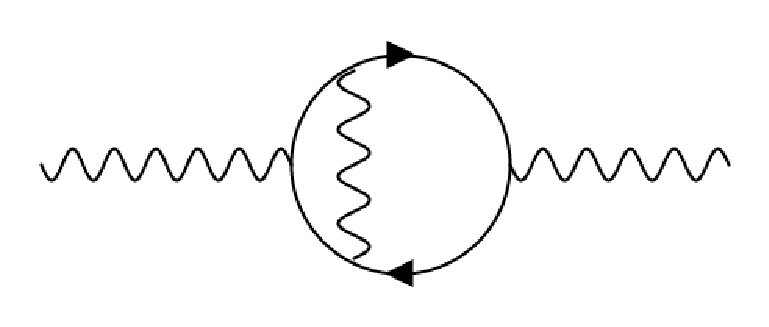}
    \caption{A propagator diagram}
    \label{propagator}
    \end{figure}
 
    \item For $n>2$, $\partial^n_{\varphi_c}G|_{\varphi_c=0}=[\varphi_c^n]G$ is called the \textit{$n$-point function}.
\end{enumerate}

In quenched QED, some of the quantities that we are going to compare their expansions with the generating series of $2$-connected chord diagrams are the \textit{renormalized} Green functions with respect to a chosen \textit{residue}. We will not however digress into the  Hopf-algebraic treatment of renormalization here. For more about this topic the reader can consult \cite{manchonhopf,alipaper2con}, or the original paper by D. Kreimer and A. Connes \cite{conneskreimer}.



\section{$k$-Connected Chord Diagrams}

\begin{dfn}[Chord diagrams]
A \textit{chord diagram} on $n$ chords (i.e. of size $n$) is geometrically perceived as a circle with $2n$ nodes that are matched into disjoint pairs, with each pair corresponding to a \textit{chord}. 
\end{dfn}

\begin{dfn}[Rooted chord diagrams]
A \textit{rooted} chord diagram is a chord diagram with a selected node. The selected node is called the \textit{root vertex}, and the chord with the root vertex is called the \textit{root chord}. In other words, a rooted chord diagram of size $n$ is a matching of the set $\{1,\ldots,2n\}$. For an algebraic definition, this is the same as a fixed-point free involution in $S_{2n}$. Then the generating series for rooted chord diagrams is 

\begin{equation}\label{rootedchorddiagsgen}
    D(x):=\sum_{n=0}(2n-1)!!\; x^n
\end{equation}

All chord diagrams considered here are going to be rooted and so, when we say a chord diagram we tacitly mean a rooted one.
\end{dfn}

Now, a rooted chord diagram can be represented in a linear order, by numbering the nodes in counterclockwise order, starting from the root which receives the label `$1$'. A chord in the diagram may be referred to as $c=\{a<b\}$, where $a$ and $b$ are the nodes in the linear order.

\begin{dfn}[Intervals]
In the linear form of a rooted chord diagram, an $interval$ is the space to the right of one of the nodes in the linear representation. Thus, a rooted diagram on $n$ chords has $2n$ intervals.
\end{dfn}
 
For example, this includes the space to the right of the last node (in linear order). 

\begin{figure}[!htb]
    \centering
    \includegraphics[scale=0.55]{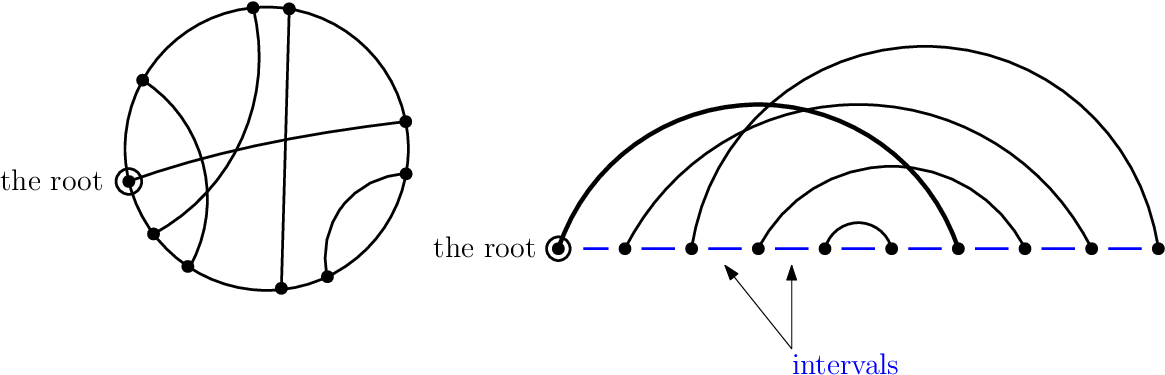}
    \caption{A rooted chord diagram and its linear representation}
\end{figure}

As may be expected by now, the crossings in a chord diagram encode much of the structure and so we ought to give proper notation for them. Namely, in the linear order, two chords $c_1=\{v_1<v_2\}$ and $c_2=\{w_1<w_2\}$ are said to \textit{cross} if $v_1<w_1<v_2<w_2$ or $w_1<v_1<w_2<v_2$.
Tracing all the crossings in the diagram leads to the following definition:

\begin{dfn}[The Intersection Graph] 
Given a (rooted) chord diagram $D$ on $n$ chords, consider the following graph $\mathcal{G}_D$: the chords of the diagram will serve as vertices for the new graph, and there is an edge between the two vertices $c_1=\{v_1<v_2\}$ and $c_2=\{w_1<w_2\}$ if $v_1<w_1<v_2<w_2$ or $w_1<v_1<w_2<v_2$, i.e. if the chords \textit{cross} each other. The graph so constructed is called the \textit{intersection graph} of the given chord diagram.  
\end{dfn}

\begin{rem}
A labelling for the intersection graph can be obtained as follows: give the label $1$ to the root chord;  order the components obtained if the root is removed according to the order of the first vertex of each of them in the linear representation, say the components are $C_1,\ldots,C_n$; and then recursively label each of the components. It is easily verified that a rooted chord diagram can be uniquely recovered from its labelled intersection graph.\\
\end{rem}

\begin{dfn}[Connected Chord Diagrams]\label{c}
A (rooted) chord diagram is said to be \textit{connected} if its  intersection graph is connected (in the graph-theoretic sense). A \textit{connected component} of a diagram is a subset of chords which itself forms a connected chord diagram. The term \textit{root component} will refer to the connected component containing the root chord.
\end{dfn}

\begin{exm}
The diagram $D$ below is a connected chord diagram in linear representation, where the root node is drawn in black.

\begin{center}
\includegraphics[scale=0.5]{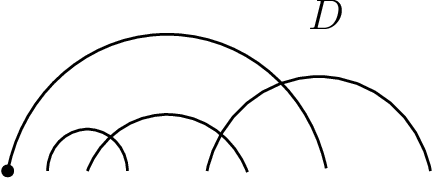}\end{center}
\end{exm}

The generating function for connected chord diagrams (in the number of chords) is denoted by $C(x)$. Thus $C(x)=\sum_{n=0}C_n x^n$, where $C_n$ is the number of connected chord diagrams on $n$ chords. The first terms of $C(x)$ are found to be 
\[C(x)=x+x^2+4x^3+27\;x^4+248\;x^5+\cdots\;;\]
the reader may refer to OEIS sequence \href{https://oeis.org/A000699}{A000699}  for more coefficients. 
The next lemma lists some classic decompositions for chord diagrams (see \cite{flajoletchords} for example).

\begin{lem}[\cite{flajoletchords, alipaper2con}]\label{cd}
  If $D(x), C(x)$ are the generating series for chord diagrams and connected chord diagrams respectively, then 
  \begin{enumerate}
      \item[$\mathrm{(i)}$]  $D(x)=1+C(xD(x)^2)$, \label{i1}
      \item[$\mathrm{(ii)}$] $D(x)=1+xD(x)+2x^2D'(x)$, and \label{i2}
      \item[$\mathrm{(iii)}$] $2xC(x)C'(x)=C(x)(1+C(x))-x$.\label{i3}
  \end{enumerate}

  \end{lem}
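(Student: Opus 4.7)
The plan is to prove the three identities in order: (i) and (ii) by combinatorial decomposition, and (iii) as an algebraic consequence of the first two.

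For (i), I would use a decomposition based on the root connected component. The empty diagram accounts for the summand $1$. Otherwise, a non-empty chord diagram has a well-defined root component $R$ with some $k\ge 1$ chords. The $2k$ endpoints of $R$ occur at positions $p_1=1<p_2<\cdots<p_{2k}$ in the linear order and carve out $2k$ open slots (one after each $p_i$, including the one past $p_{2k}$; there is no slot before $p_1=1$ because the root occupies position $1$). The key observation, which follows from the connectedness of $R$ in the intersection graph, is that every chord of the diagram not belonging to $R$ must have both of its endpoints in a single slot; for if its endpoints lay in distinct slots, then by the connectedness of $R$ it would cross some chord of $R$ and thus itself be a chord of $R$. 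Consequently, each slot is filled independently with an arbitrary rooted chord diagram, so the root component of size $k$ contributes $x^kD(x)^{2k}$. Summing over $k\ge 1$ gives $\sum_{k\ge 1}C_k x^k D(x)^{2k}=C(xD(x)^2)$, yielding (i).

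For (ii), I would invoke the classical recurrence $D_n=(2n-1)D_{n-1}$ for $n\ge 1$, which is immediate: the root chord may connect position $1$ to any of the remaining $2n-1$ positions, and the other $n-1$ chords form an arbitrary chord diagram on the remaining $2n-2$ nodes. Writing $2n-1=1+(2n-2)$ and translating term-by-term into generating functions produces $D(x)-1=xD(x)+2x^2D'(x)$, since $\sum_{n\ge 1}(2n-2)D_{n-1}x^n=2x^2D'(x)$.

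For (iii), I would derive it algebraically from (i) and (ii). Set $u=xD(x)^2$, so that (i) gives $C(u)=D-1$. Differentiating (i) in $x$ gives $D'=C'(u)(D^2+2xDD')$, whence
\[
2uC(u)C'(u)\;=\;\frac{2xD^2(D-1)D'}{D^2+2xDD'}\;=\;\frac{2xD(D-1)D'}{D+2xD'}.
\]
On the other hand, $C(u)(1+C(u))-u=(D-1)D-xD^2=D(D-1-xD)$, and by (ii) the factor $D-1-xD$ equals $2x^2D'$, so $C(u)(1+C(u))-u=2x^2DD'$. Equating these two expressions and clearing the denominator $D+2xD'$ reduces to the identity $D-1=xD+2x^2D'$, which is exactly (ii); hence (iii) holds.

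The main obstacle lies in (i): one must carefully justify the claim that a chord outside the root component cannot straddle two slots of the root-component endpoints. This is a connectedness argument in the intersection graph, and while standard, it requires a short separate verification. Once this is done, (ii) is a routine translation of a classical recurrence, and (iii) reduces by direct substitution and the chain rule to (ii).
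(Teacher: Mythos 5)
Your argument is correct. Parts (i) and (ii) follow the standard route (decomposition at the root connected component, and the recurrence $D_n=(2n-1)D_{n-1}$), which is essentially what the cited sources do. For (iii) you genuinely diverge from the paper: the paper does not reprove the lemma, but it explicitly attributes (iii) to the \emph{root share decomposition}, the bijection $\nabla:\mathcal{C}\longrightarrow\mathcal{C}\times(\mathcal{C},\mathbb{N})$ defined later in Section \ref{YYYYY}, under which a connected diagram of size $\geq 2$ corresponds to a pair consisting of a connected diagram and a connected diagram with a marked non-rightmost interval; counting the $2m-1$ admissible intervals of a size-$m$ diagram gives $C(x)-x=C(x)\bigl(2xC'(x)-C(x)\bigr)$, which is (iii), directly and bijectively. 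You instead substitute $t=xD(x)^2$ into (i), differentiate, and reduce everything to (ii); this is a valid and pleasantly short algebraic derivation, with two small points worth making explicit: (a) to transport the identity from the variable $x$ back to the variable $t$ you should note that $u(x)=xD(x)^2=x+O(x^2)$ has a compositional inverse, so an identity holding after substitution holds as an identity of series; (b) in (i), the claim that a chord outside the root component $R$ cannot straddle two slots needs the observation that a chord lying entirely inside an interval $(a,b)$ cannot cross one lying entirely outside it, so a straddling chord that crossed no chord of $R$ would split $R$'s intersection graph into two nonempty mutually non-crossing parts, contradicting connectedness. The trade-off is that the paper's combinatorial proof of (iii) buys more than the identity itself: the decomposition $\nabla$ is reused to define the $\Psi$-order and the explicit bijection $\Lambda$ with Yukawa tadpoles, whereas your algebraic route, while self-contained, does not produce that structure.
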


The main object we use throughout is chord diagrams with certain degrees (strengths) of connectivity.

\begin{dfn}[$k$-Connected Chord Diagrams]
A chord diagram on $n$ chords is said to be $k$-\textit{connected} if there is no set $S$ of consecutive endpoints, with $|S|< 2n-k$, $S$ is paired with less than $k$ endpoints not in $S$ (here we assume the endpoints are consecutive in the sense of the linear representation). In other words, the diagram requires the deletion of at least $k$ chords to become disconnected. A $k$-\textit{connected} diagram which is not $k+1$-\textit{connected} will be said to have \textit{connectivity} $k$.
\end{dfn}

The following relation between connected and 2-connected chord diagrams was established by the author in \cite{alipaper2con}.

\begin{prop}[\cite{alipaper2con}]\label{myproposition2connected}
The following functional relation between connected and $2$-connected diagrams holds:
\begin{equation}
    C=\displaystyle\frac{C^2}{x}-C_{\geq2}\left(\displaystyle\frac{C^2}{x}\right). \label{c2con}
\end{equation}
\end{prop}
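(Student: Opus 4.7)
The plan is to prove \eqref{c2con} by a combinatorial decomposition of connected chord diagrams with respect to the root $2$-connected block, reading the identity in the rearranged form
\[ \frac{C^2}{x} = C + C_{\geq 2}\!\left(\frac{C^2}{x}\right). \]
The coefficient of $x^n$ in $C^2/x$ counts ordered pairs $(D_1,D_2)$ of connected rooted chord diagrams with $|D_1|+|D_2|=n+1$; the series $C$ counts connected diagrams on $n$ chords; and the substitution $C_{\geq 2}(C^2/x)$ counts a $2$-connected skeleton $B$ of size $k\ge 2$ together with one such ordered pair of connected diagrams decorating each of its $k$ chords, with total size $n$.

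The central step is a root-block decomposition of a connected chord diagram $D$. I would let $B(D)$ be the maximal $2$-connected subdiagram of $\mathcal{G}_D$ containing the root chord, and at each chord $c$ of $B(D)$ I would isolate the pendant sub-structure hanging off $c$: the union of all blocks of $\mathcal{G}_D$ that meet $B(D)$ only through $c$, together with their descendants in the block tree. Recording each such pendant by an ordered pair of connected chord sub-diagrams (one for each arc of $D$ cut out by $c$) produces the data counted by $C^2/x$ at that chord; the $1/x$ factor reflects the fact that $c$ itself is shared between $B(D)$ and its pendant and thus counted only once in the total chord count.

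With this decomposition, the assignment $D\mapsto\bigl(B(D),\text{pendant pairs}\bigr)$ becomes a bijection between connected diagrams with $|B(D)|\ge 2$ and the objects enumerated by $C_{\geq 2}(C^2/x)$. The remaining connected diagrams, those whose root $2$-connected block is just the root chord, contribute exactly the $C$ term, yielding $C^2/x = C + C_{\geq 2}(C^2/x)$ and hence \eqref{c2con} after rearrangement.

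The main obstacle will be verifying that the pendant sub-structure at each $c\in B(D)$ genuinely splits into a pair of connected sub-diagrams, one per arc of $c$, rather than some more intricate multi-component species; this rests on the tree structure of the $2$-connected blocks of $\mathcal{G}_D$ (two blocks meet only at a cut chord, so the two sides of $c$ carry independent connected structures) and on the convention, implicit in the accounting here, that $C_{\geq 2}$ records $2$-connected diagrams only from size $\ge 2$, with the trivial single-chord block absorbed uniformly into the subtracted $C$ term.
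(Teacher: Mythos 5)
There is a genuine gap here, and it shows up already in the bookkeeping. You partition the class $\mathcal{C}$ of connected diagrams into those whose root $2$-connected block is just the root chord and those with $|B(D)|\geq 2$, and then assert the two parts are counted by $C$ and by $C_{\geq 2}\bigl(C^2/x\bigr)$ respectively. But a partition of $\mathcal{C}$ must have parts summing to $C$, not to $C^2/x$, so the conclusion $C^2/x = C + C_{\geq 2}(C^2/x)$ cannot emerge from a decomposition whose ambient set is $\mathcal{C}$. The claimed bijection also fails numerically: from $C = x + x^2 + 4x^3 + 27x^4 + \cdots$ one gets $C^2/x - C = x^2 + 5x^3 + 35x^4 + \cdots$, so $C_{\geq 2}(C^2/x)$ would have to count $5$ connected diagrams on $3$ chords with nontrivial root block, whereas there are only $4$ connected diagrams on $3$ chords in total. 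Since $C^2/x$ strictly dominates $C$ coefficientwise from degree $2$ on, the objects enumerated by $C_{\geq 2}(C^2/x)$ cannot form a subclass of $\mathcal{C}$; they are necessarily the \emph{pairs} of connected diagrams with sizes summing to $n+1$ (e.g.\ two connected diagrams sharing a chord), and it is that larger class --- not $\mathcal{C}$ --- that must be decomposed as ``a single connected diagram'' plus ``a $2$-connected skeleton each of whose chords carries such a pair.'' Relatedly, the pendant hanging off a chord $c$ of the root block is not in general an ordered pair of connected subdiagrams, one per arc of $c$: the chords in the region cut out by $c$ may form several components distributed among the intervals, so even the local decoration is not the species you read off as $C^2/x$.

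Note also that the present paper does not prove this proposition; it is imported from \cite{alipaper2con}, so there is no in-text argument to compare against. Whatever route you take, the starting point has to be an interpretation of $C^2/x$ as a class of enriched or paired connected diagrams (the root share decomposition of Lemma \ref{cd} is the natural tool for gluing and splitting at the root chord), after which the identity reads as a decomposition of that class with $\mathcal{C}$ appearing as the ``degenerate'' part; your current setup inverts which side of the equation is the ambient set.
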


The following corollary is about the asymptotic behaviour of $C_{\geq2}(x)$. See \cite{michi1} or Appendix A in \cite{alipaper2con} for more on factorially divergent power series and the terminology.

\begin{cor}[\cite{alipaper2con}]\label{imhere}
$C_{\geq2}(x)\in\mathbb{R}[[x]]^2_{\frac{1}{2}}$ and $
    \left(\mathcal{A}_{\frac{1}{2}}^2C_{\geq2}\right)(x)=
    \displaystyle\frac{1}{\sqrt{2\pi}}\cdot\displaystyle\frac{x^2}
   {C_{\geq2}S}\cdot  e^{\frac{-1}{2x}\left[\left(S+x\right)^2-1\right]},
$

where $S(x)=\displaystyle\frac{1}{\Big(1-\displaystyle\frac{C_{\geq2}(x)}{x}\Big)}$ is the generating series for sequences of $2$-connected chord diagrams counted by one less chord. Consequently, one computes that

\begin{align}
    (C_{\geq2})_n &= e^{-2} (2n-1)!!\bigg(1-\displaystyle\frac{6}{2n-1}-\displaystyle\frac{4}{(2n-3)(2n-1)}-\displaystyle\frac{218}{3}\displaystyle\frac{1}{(2n-5)(2n-3)(2n-1)}- \nonumber\\
    &\;   \qquad     -\displaystyle\frac{890}{(2n-7)(2n-5)(2n-3)(2n-1)}-\displaystyle\frac{196838}{15}\displaystyle\frac{1}{(2n-9)\cdots(2n-1)}-\cdots\bigg).\nonumber\\
    &\label{computC2asympt}
\end{align}
\end{cor}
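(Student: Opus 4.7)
The plan is to extract the alien derivative $\mathcal{A}^2_{1/2}C_{\geq 2}$ from the functional equation~\eqref{c2con} of Proposition~\ref{myproposition2connected} by applying the chain rule for $\mathcal{A}^2_{1/2}$ on factorially divergent power series, and then to translate the resulting closed form into the coefficient asymptotics via the standard dictionary between $\mathcal{A}^2_{1/2}g$ and the factorial tail of $g_n$ (as formulated in \cite{michi1} or Appendix~A of \cite{alipaper2con}).

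First I would rewrite~\eqref{c2con} as $C_{\geq 2}(f(x))=f(x)-C(x)$, where $f(x):=C(x)^2/x=x+2x^2+9x^3+\cdots\in x+x^2\mathbb{R}[[x]]$. Since $C\in\mathbb{R}[[x]]^2_{1/2}$ is already known, and since $\mathbb{R}[[x]]^2_{1/2}$ is a subring closed under substitution by elements of $x+x^2\mathbb{R}[[x]]$, the right-hand side $f-C$ lies in $\mathbb{R}[[x]]^2_{1/2}$. Using that $f$ is invertible under composition, this in turn forces $C_{\geq 2}\in\mathbb{R}[[x]]^2_{1/2}$, establishing the first assertion.

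Next I would apply $\mathcal{A}^2_{1/2}$ to both sides of $C_{\geq 2}\circ f=f-C$. The chain rule for $\mathcal{A}^2_{1/2}$ decomposes $\mathcal{A}^2_{1/2}[C_{\geq 2}\circ f]$ into a Leibniz-type piece $C_{\geq 2}'(f(x))\,\mathcal{A}^2_{1/2}f(x)$ and a composition piece $(\mathcal{A}^2_{1/2}C_{\geq 2})(f(x))$ multiplied by a transport factor of the shape $(x/f(x))^{1/2}\exp\!\bigl(\tfrac{1}{2x^2}-\tfrac{1}{2f(x)^2}\bigr)$. Inserting the known formula for $\mathcal{A}^2_{1/2}C$ together with the expression for $\mathcal{A}^2_{1/2}f$ derived from it by the Leibniz rule on $f=C^2/x$, then solving algebraically for $(\mathcal{A}^2_{1/2}C_{\geq 2})(f(x))$ and reverting the substitution $y=f(x)$, produces an expression for $\mathcal{A}^2_{1/2}C_{\geq 2}(x)$ purely in terms of $x$ and $C_{\geq 2}(x)$. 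The identity $1-C_{\geq 2}(x)/x=1/S(x)$ is then the algebraic manoeuvre that collapses the rational prefactor to $x^2/(C_{\geq 2}(x)S(x))$ and the exponent to $-[(S(x)+x)^2-1]/(2x)$, yielding the displayed closed form.

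For the coefficient expansion~\eqref{computC2asympt}, I would Taylor-expand this closed form at $x=0$. Since $C_{\geq 2}(x)=x^2+O(x^3)$ (a consequence of~\eqref{c2con}), one has $S(0)=1$ and $(S(x)+x)^2-1=4x+O(x^2)$, so the exponential evaluates to $e^{-2}$ at leading order and the constant term of $\mathcal{A}^2_{1/2}C_{\geq 2}(x)$ is $e^{-2}/\sqrt{2\pi}$. The defining property of $\mathcal{A}^2_{1/2}$ then translates each Taylor coefficient $c_k$ of $\mathcal{A}^2_{1/2}C_{\geq 2}$ into a contribution of the form $c_k\sqrt{2\pi}\,(2n-1)!!/\bigl((2n-1)(2n-3)\cdots(2n-2k+1)\bigr)$ to $(C_{\geq 2})_n$; reading off the successive Taylor coefficients reproduces~\eqref{computC2asympt} term by term. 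The main obstacle is the careful application of the chain rule under the composition $C_{\geq 2}\circ f$ of two factorially divergent series, together with the algebra required to collapse the result into the stated clean form involving $S$ and $C_{\geq 2}$; once that is done, the rest is a routine series manipulation.
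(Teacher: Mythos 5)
The paper gives no proof of this corollary here — it is imported from \cite{alipaper2con}, where it is derived exactly as you propose: by applying $\mathcal{A}^2_{1/2}$ to the functional equation of Proposition \ref{myproposition2connected} via the chain rule for factorially divergent power series, solving for $\bigl(\mathcal{A}^2_{1/2}C_{\geq2}\bigr)\circ f$ with $f=C^2/x$, reverting the substitution using $f^{-1}(y)=(y-C_{\geq2}(y))^2/y=y/S(y)^2$, and then reading off the coefficient asymptotics from the Taylor coefficients of the result; so your approach is essentially the same. The one slip is the transport factor, whose exponent should be $\tfrac{1}{2}\bigl(\tfrac{1}{x}-\tfrac{1}{f(x)}\bigr)$ rather than $\tfrac{1}{2x^2}-\tfrac{1}{2f(x)^2}$ (as is consistent with the $e^{-\frac{1}{2x}[\cdots]}$ in the stated closed form), but since you only asserted its ``shape'' this does not affect the validity of the argument.
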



\section{QED Theories, Quenched QED, and Yukawa Theory}

The two theories that we consider here are quenched QED and Yukawa theory, which are examples of QED-type theories. In these theories we have two particles, namely, fermion and boson (wiggly and dashed edges) particles, and we have only three-valent vertices of the type fermion-fermion-boson. We will compute the asymptotics of $\;z_{\phi_c|\psi_c|^2}(\hbar_{\text{ren}})\;$ in quenched QED, as well as the asymptotics of the green functions $\left.\partial^i_{\phi_c}(\partial_{\psi_c}\partial_{\bar{\psi}_c})^j\;G^{\text{Yuk}}\right|_{\phi_c=\psi_c=0}$. The approach is completely combinatorial and depends on establishing bijections between the diagrams in the combinatorial interpretation of the considered series and different classes of chord diagrams. Unlike the approach applied in \cite{michiq}, we do not need to refer to singularity analysis nor the representation of $\mathcal{S}(x)$ by affine hyperelliptic curves.

\subsection{The Partition Function}\label{partitionfunctionsection}
The partition function takes the form 
\begin{equation}
 Z(\hbar,j,\eta)=\int_\mathbb{R} \displaystyle\frac{1}{\sqrt{2\pi\hbar}}\; e^{\frac{1}{\hbar}\left(-\frac{x^2}{2}+jx+\frac{|\eta|^2}{1-x}+\hbar \log \frac{1}{1-x}\right)}dx .\label{Yu}\end{equation}

We are not going to discuss the physical reasoning behind the above expression, the reader may refer to QFT books or surveys for more details, e.g. see \cite{michi}. We only hint that, combinatorially, $\hbar \log \frac{1}{1-x}$ generates fermion loops, while $\frac{|\eta|^2}{1-x}$ generates a fermion propagator. The special examples of Yukawa theory and quenched QED will be as follows:

\begin{enumerate}
    \item Quenched QED is an approximation of QED where fermion loops are not present. So that the term $\hbar \log \frac{1}{1-x}$ does not appear in the partition function. Thus, the partition function for quenched QED is given by 

\[Z^{QQED}(\hbar,j,\eta)=\int_\mathbb{R} \displaystyle\frac{1}{\sqrt{2\pi\hbar}}\; e^{\frac{1}{\hbar}\left(-\frac{x^2}{2}+jx+\frac{|\eta|^2}{1-x}\right)}dx.\]
    
    \item For zero-dimensional Yukawa theory the partition function is just the integral in equation (\ref{Yu}). That is, the partition function for zero-dimensional Yukawa theory is given by 

\[Z^{Yuk}(\hbar,j,\eta)=\int_\mathbb{R} \displaystyle\frac{1}{\sqrt{2\pi\hbar}}\; e^{\frac{1}{\hbar}\left(-\frac{x^2}{2}+jx+\frac{|\eta|^2}{1-x}+\hbar \log\frac{1}{1-x}\right)}dx.\]
\end{enumerate}

In the next section we are going to display some of the literature on chord diagrams which will be essential for establishing the results.

\section{Quenched QED}

The case of quenched QED (QQED) was studied in \cite{alipaper2con}, so we will only state the results here for the sake of completeness. We studied the counterterm $\;z_{\phi_c|\psi_c|^2}(\hbar_{\text{ren}})\;$ obtained in \cite{michiq} (page 38). As mentioned earlier, QQED has a unique vertex type which is 3-valent. Hence, by \cite{michilattice}, the series $\;z_{\phi_c|\psi_c|^2}(\hbar_{\text{ren}})\;$ enumerates the number of primitive quenched QED diagrams with vertex-type residue (see sequence \href{https://oeis.org/A049464}{A049464} of the OEIS for the first entries).

This is the same as counting the number of all diagrams $\gamma$ with the following specifications:
\begin{enumerate}
    \item two types of edges, fermion and boson (photon) edges, represented as\; \raisebox{-0cm}{\includegraphics[scale=0.6]{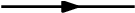}}\; and \raisebox{-0.23cm}{\includegraphics[scale=0.273]{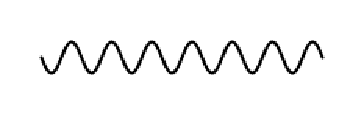}},  respectively;
    \item only three-valent vertices with the structure \raisebox{-0.64cm}{\includegraphics[scale=0.35]{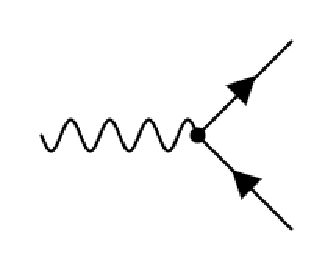}}, with one fermion in, one fermion out, and one photon;
    
    \item no fermion loops;
    \item the residue $\text{res} (\gamma)$ is vertex-type; and
    \item $\gamma$ is $1PI$ primitive, in other words it is edge-connected and contains no subdivergences.
\end{enumerate}

\begin{thm}[\cite{alipaper2con}]\label{myresultinquenched}
The generating series for the renormalization counterterms $\;z_{\phi_c|\psi_c|^2}(\hbar_{\text{ren}})\;$ and $\;z_{|\psi_c|^2}(\hbar_{\text{ren}})\;$ count $2$-connected chord diagrams. More precisely, $$[\hbar_{\text{ren}}^{n-1}]\;z_{\phi_c|\psi_c|^2}(\hbar_{\text{ren}})=
[\hbar_{\text{ren}}^{n}]\;z_{|\psi_c|^2}(\hbar_{\text{ren}})=[x^n]\;C_{\geq2}(x).$$
\end{thm}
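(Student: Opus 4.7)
The plan is to prove both identities via an explicit bijection between the primitive 1PI quenched-QED diagrams counted by each counterterm and the rooted 2-connected chord diagrams of the appropriate size. The key enabling observation is that, because quenched QED has no fermion loops and the only vertex type is fermion--fermion--photon, every such diagram $\gamma$ possesses a unique \emph{fermion line}: a path entering at one external fermion leg, visiting every vertex exactly once, and exiting at the other external fermion leg. This line linearly orders the vertices of $\gamma$, and lets us record each internal photon as a chord between the positions of the two vertices it joins.

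For the propagator-type case $[\hbar_{\text{ren}}^n]\, z_{|\psi_c|^2} = [x^n]\, C_{\geq 2}(x)$, the residue carries no external photon, so at $n$ loops all $V = 2n$ photons are internal and pair up the $2n$ vertices, producing a rooted chord diagram of size $n$ (rooted at the first vertex on the fermion line). For the vertex-type case $[\hbar_{\text{ren}}^{n-1}]\, z_{\phi_c|\psi_c|^2} = [x^n]\, C_{\geq 2}(x)$, the residue has one external photon attached at some vertex $i$ of the fermion line of length $V = 2n-1$; I would encode such a diagram by prepending a virtual root position before vertex $1$ and declaring the root chord to run from this root to $i$, with the remaining $n-1$ chords recording the internal photons among the other $2n-2$ positions. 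Both constructions are clearly invertible, and the shift between $\hbar^{n-1}$ and $\hbar^n$ reflects exactly the $V = 2n-1$ versus $V = 2n$ vertex counts.

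The main technical step -- and the principal obstacle -- is to show that primitivity on the QED side corresponds to 2-connectedness on the chord-diagram side. Because the fermion line is Hamiltonian, a proper 1PI subgraph $\gamma' \subsetneq \gamma$ must sit on a contiguous block of vertices $\{a, a+1, \ldots, b\}$ together with every internal photon whose two endpoints lie in the block. A propagator-type $\gamma'$ has no external photon and hence corresponds to a consecutive set $S$ of chord endpoints with zero chords leaving $S$ (a disconnection of the chord diagram); a vertex-type $\gamma'$ has exactly one external photon and corresponds to $S$ with exactly one chord leaving -- precisely the obstruction to 2-connectedness. The absence of either type of proper sub-divergence is thus equivalent to the chord diagram being 2-connected. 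Verifying this correspondence bijectively in both directions, and handling the edge cases at the root and at the external fermion legs, is the most delicate bookkeeping in the argument.

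With the bijection and the primitivity--2-connectedness equivalence in place, the two identities follow immediately by counting. If the direct combinatorial identification of sub-divergences with non-2-connecting blocks turns out to be too fiddly, an alternative route is to compare the Dyson--Schwinger-type equations derived in \cite{michiq} for the two counterterms against the functional relation for $C_{\geq 2}$ in Proposition \ref{myproposition2connected} and conclude by the uniqueness of formal power series solutions matching on low-order coefficients.
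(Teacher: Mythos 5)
The paper does not prove this theorem; it is imported verbatim from \cite{alipaper2con} ``for the sake of completeness,'' with the identification of the $z$-coefficients as counts of primitive diagrams already delegated to \cite{michilattice}. Your outline is the standard (and, as far as the cited source goes, the actual) route: the absence of fermion loops forces a single Hamiltonian fermion line, internal photons become chords on its $2n$ (resp.\ $2n-1$ plus a virtual root) vertices, and a divergent 1PI subgraph --- necessarily a contiguous block of the fermion line with $0$ or $1$ photons leaving, since photon-propagator subdivergences would require a fermion loop --- is exactly a consecutive endpoint set violating the paper's definition of $2$-connectivity. The loop-order bookkeeping ($V=2n$ versus $V=2n-1$ explaining the shift between $[\hbar_{\text{ren}}^{n}]$ and $[\hbar_{\text{ren}}^{n-1}]$) checks out, so I see no gap beyond the routine verification you already flag.
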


This says that we can fathom the asymptotic behaviour of these counterterms by our knowledge of 2-connected chord diagrams, represented by Corollary \ref{imhere}. This gives 

\begin{align}
    [\hbar_{\text{ren}}^{n-1}]\;&z_{\phi_c|\psi_c|^2}(\hbar_{\text{ren}})=
[\hbar_{\text{ren}}^{n}]\;z_{|\psi_c|^2}(\hbar_{\text{ren}})=[x^n]\;C_{\geq2}(x)=\nonumber\\
    & = e^{-2} (2n-1)!!\bigg(1-\displaystyle\frac{6}{2n-1}-\displaystyle\frac{4}{(2n-3)(2n-1)}-\displaystyle\frac{218}{3}\displaystyle\frac{1}{(2n-5)(2n-3)(2n-1)}- \nonumber\\
    &\;   \qquad     -\displaystyle\frac{890}{(2n-7)(2n-5)(2n-3)(2n-1)}-\displaystyle\frac{196838}{15}\displaystyle\frac{1}{(2n-9)\cdots(2n-1)}-\cdots\bigg),
\end{align} which matches with the calculation through singularity analysis made in \cite{michiq}.


\section{Yukawa Theory}\label{YYYYY}
In this section we will establish the connection between interacting Yukawa theory and chord diagrams. Unlike the case of quenched QED, the relation this time is well hidden. The problem was based on an observation, made by the author, upon seeing Table 18 (a) in \cite{michiq} while working on the quenched QED case. For the sake of clarity, let us display Table 18 (a) of \cite{michiq} (with an extra column added to emphasize the relation to chord diagrams):

\begin{table}[h]
\center
\begin{tabular}{|c|c||c|c|c|c|c|c|c|}\hline
 &            &$\hbar^0$ & $\hbar^1$   & $\hbar^2$  & $\hbar^3$ & $\hbar^4$ & $\hbar^5$ & $\hbar^6$                           \\\hline\hline
1&$\partial_{\phi_c}^0(\partial_{\psi_c}\partial_{\bar{\psi}_c})^0
G^{\text{Yuk}}\big|_{\phi_c=\psi_c=0}$     
              &   0  & 0     & 1/2    & 1     & 9/2   & 31    & 283   
              \\\hline
2&$\partial_{\phi_c}^1(\partial_{\psi_c}\partial_{\bar{\psi}_c})^0
G^{\text{Yuk}}\big|_{\phi_c=\psi_c=0}$      
              &   0  & 1     & 1      & 4     & 27    & 248   & 2830       
              \\\hline
3&$\partial_{\phi_c}^2(\partial_{\psi_c}\partial_{\bar{\psi}_c})^0
G^{\text{Yuk}}\big|_{\phi_c=\psi_c=0}$      
              &  -1  & 1     & 3      & 20    & 189   & 2232  & 31130      
              \\\hline
4&$\partial_{\phi_c}^0(\partial_{\psi_c}\partial_{\bar{\psi}_c})^1
G^{\text{Yuk}}\big|_{\phi_c=\psi_c=0}$ 
              &  -1  & 1     & 3      & 20    & 189   & 2232  & 31130   
              \\\hline
5&$\partial_{\phi_c}^1(\partial_{\psi_c}\partial_{\bar{\psi}_c})^1
G^{\text{Yuk}}\big|_{\phi_c=\psi_c=0}$
              &   1  & 1     & 9      & 100   & 1323  & 20088 & 342430       
              \\\hline
 \end{tabular}\vspace{0.3cm}\caption{The first coefficients of the proper Green functions $\left.\partial_{\phi_c}^i(\partial_{\psi_c}\partial_{\bar{\psi}_c})^jG^{\text{Yuk}}\right|_{\phi_c=\psi_c=0}$  of Yukawa theory such that $i+2j\in\{0,1,2,3\}$.  }
\label{table4}
\end{table}
As we have mentioned in Section \ref{partitionfunctionsection}, Yukawa theory is different from quenched QED in that  fermion loops are allowed in the diagrams. Thus, if $\mathcal{U}_{ij}$ denotes the class of 1PI Feynman graphs counted by the Green function $\left.\partial_{\phi_c}^i(\partial_{\psi_c}\partial_{\bar{\psi}_c})^jG^{\text{Yuk}}\right|_{\phi_c=\psi_c=0}$, then, to our combinatorial concern,   $\mathcal{U}_{ij}$ consists of all graphs $\gamma$ with the following specifications:
\begin{enumerate}
   \item two types of edges (as before), fermion and boson (meson) edges, represented as \raisebox{-0cm}{\includegraphics[scale=0.6]{Figures/fermionedge.eps}} \;and \raisebox{-0.2cm}{\includegraphics[scale=0.4]{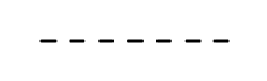}},  respectively;
    \item only three-valent vertices with the structure \raisebox{-0.64cm}{\includegraphics[scale=0.3]{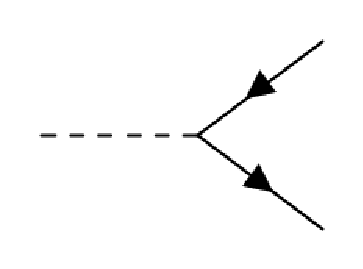}}, with one fermion in, one fermion out, and one boson;
    
    \item fermion loops are allowed;
    \item the residue $\text{res} (\gamma)$ has $i$ external boson legs,  $j$  external fermion-in legs, and $j$ external fermion-out legs; and 
    \item $\gamma$ is $1PI$, i.e. it has no bridges (2-edge-connected). This is implied by the definition of proper Green functions.
\end{enumerate}

Note that, unlike the quenched QED case, we are not restricting to primitive diagrams since we are not working with any expressions from renormalization in this part (yet).

\begin{lem}\label{property1Yukawa}
Let $\Gamma$ be a Yukawa theory 1PI graph with $2j$ external fermion legs. Then
\begin{equation}
    |V(\Gamma)|=f+j,
\end{equation}

where, as before, $|V(\Gamma)|$ is the number of vertices and $f$ is the number of internal fermion edges.
\end{lem}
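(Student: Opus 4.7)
The plan is a straightforward half-edge counting argument based on the vertex type of Yukawa theory. Recall from specification (2) in the excerpt that every vertex of $\Gamma$ is $3$-valent of type fermion-in, fermion-out, boson, so each vertex is incident to exactly two fermion half-edges (one incoming, one outgoing). Counting fermion half-edges by vertex thus yields a total of $2|V(\Gamma)|$ fermion half-edges in $\Gamma$.

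Next I would count the same quantity by edges. Each internal fermion edge has both of its endpoints at vertices of $\Gamma$, contributing $2$ fermion half-edges to the vertex-side count, while each external fermion leg contributes exactly $1$. By specification (4), $\Gamma$ has $j$ external fermion-in legs and $j$ external fermion-out legs, for a total of $2j$ external fermion legs. Together this gives
\begin{equation*}
2|V(\Gamma)| \;=\; 2f + 2j,
\end{equation*}
and dividing by $2$ yields the claimed identity $|V(\Gamma)| = f + j$.

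There is essentially no obstacle here: the argument is a one-line application of the handshake lemma restricted to fermion half-edges, made possible by the very restrictive vertex type of Yukawa theory. The only care needed is to track that a fermion-in leg and a fermion-out leg each count as a single external half-edge, so that the total count of external fermion half-edges is $2j$ and not $j$ or $4j$. Since neither 1PI-ness nor the presence of fermion loops play any role in the counting, the same identity would hold for any Yukawa-type graph with this vertex structure, which is consistent with the statement being used later merely as a bookkeeping lemma.
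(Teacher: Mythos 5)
Your proof is correct, but it takes a genuinely different route from the paper. You use a local double-counting (handshake) argument on fermion half-edges: each vertex carries exactly two fermion half-edges by the fermion-in/fermion-out/boson vertex type, each internal fermion edge accounts for two of them and each of the $2j$ external fermion legs for one, giving $2|V(\Gamma)| = 2f + 2j$. The paper instead argues globally and structurally: it deletes all boson edges and half-edges, observes that what remains is a disjoint union of fermion loops and fermion paths, shows that the number of paths must be exactly $j$ (each path contributes precisely two external fermion legs, and all external fermion legs arise this way), and then counts vertices as (edges) on each loop and (internal edges $+1$) on each path. Your argument is shorter and, as you note, does not use 1PI-ness at all; the paper's decomposition is heavier but not wasted, since the loop-and-path structure it exposes is exactly what is reused afterwards (e.g.\ the fact that fermion loops partition the vertex set in the tadpole case, and the path structure underlying the later bijections). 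Both proofs are valid; yours is the more economical for the statement as literally given, while the paper's earns structural dividends downstream.
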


\begin{proof}
Consider the graph $\Gamma'$ obtained from $\Gamma$ by removing all boson edges and half edges. The resulting graph $\Gamma'$ is generally a collection of fermion loops and fermion paths. 
The number of these paths should be $j$. Indeed, on one hand every such path will give two external fermion legs when the boson edges are present. To see this notice that, under the vertex condition, such paths can not end with a vertex: the boson edges can not then complete the degree of such a vertex. On the other hand, it is clear that the external fermion legs can only be at the ends of such paths. 

As a consequence of the above argument, the number of vertices can be calculated as follows: every fermion loop has  as many vertices as fermion edges. Whereas in every fermion path the number of vertices is more by one the number of internal fermion edges. This proves the lemma.
\end{proof}

Next we investigate the combinatorial meaning of the Green functions in 
Table \ref{table4} and its relation to chord diagrams. 

\subsection{Yukawa Tadpole Graphs:\quad $\left.\partial_{\phi_c}^1(\partial_{\psi_c}\partial_{\bar{\psi}_c})^0G^{\text{Yuk}}(\hbar,\phi_c,\psi_c)\right|_{\phi_c=\psi_c=0}$}

By the definition mentioned earlier, $\left.\partial_{\phi_c}^1(\partial_{\psi_c}\partial_{\bar{\psi}_c})^0G^{\text{Yuk}}(\hbar,\phi_c,\psi_c)\right|_{\phi_c=\psi_c=0}$ is the generating series of Yukawa theory graphs with exactly one external leg, which is of boson type, graded by loop number. In other words, 
\[[\hbar^n]\left.\partial_{\phi_c}^1(\partial_{\psi_c}\partial_{\bar{\psi}_c})^0G^{\text{Yuk}}\right|_{\phi_c=\psi_c=0}\]
is the number of \textit{1PI} tadpole graphs with one boson leg and loop number $n$.

From Table \ref{table4}, we can conjecture that $[\hbar^n]\left.\partial_{\phi_c}^1(\partial_{\psi_c}\partial_{\bar{\psi}_c})^0G^{\text{Yuk}}\right|_{\phi_c=\psi_c=0}=C_n$, the number of connected chord diagrams on $n$ chords. Interestingly, unlike the quenched QED graphs, tadpoles do a great job hiding their chord diagrammatic structure. This however is to be unveiled Theorem \ref{myresult1inYukawa} below. In Figure \ref{27tadpoles} below, we display the tadpoles counted in $[\hbar^4]\left.\partial_{\phi_c}^1(\partial_{\psi_c}\partial_{\bar{\psi}_c})^0G^{\text{Yuk}}\right|_{\phi_c=\psi_c=0}$, which, by our claim, should be $27$ as $C_4$. 

\begin{rem} For the sake of simplicity of drawings, we will drop the direction of the fermion loops and assume it is always counter-clockwise. Besides, we will draw no more dashed boson lines, and shall instead use a light line for bosons and a heavier line for fermions. Note that the relative direction of loops matters in some cases and have to be compensated sometimes by a twist in the boson edges. For example,  the following two tadpoles in Figure \ref{two diff tadpoles1} are different, and shall be represented as in Figure \ref{two diff tadpoles11}:
\begin{figure}[h]
    \centering
   \raisebox{0cm}{\includegraphics[scale=0.35]{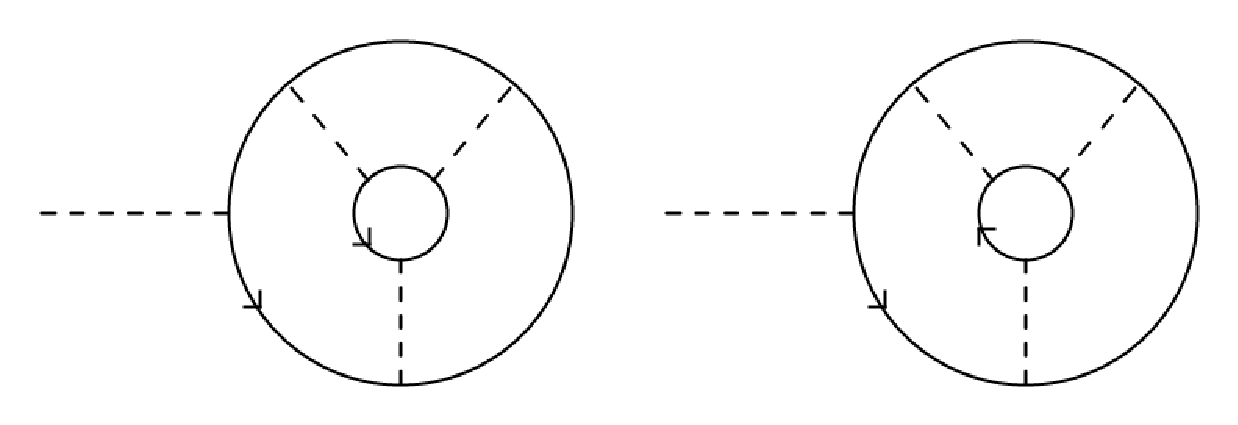}}
    \caption{Two tadpoles may differ due to the relative orientation of fermion loops.}
    \label{two diff tadpoles1}
\end{figure}
\begin{figure}[h]
    \centering
   \raisebox{0cm}{\includegraphics[scale=0.45]{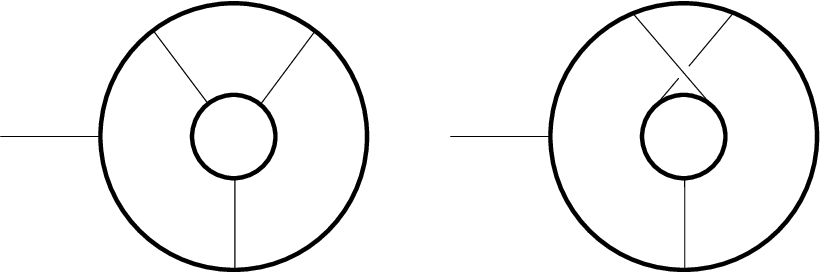}}
    \caption{All loops are now assumed oriented counter-clockwise and the attached boson edges have to be twisted accordingly; Fermion edges are drawn thicker than boson edges.}
    \label{two diff tadpoles11}
\end{figure}
\end{rem}

\begin{figure}
    \centering
   \raisebox{0cm}{\includegraphics[scale=0.8
   ]{
   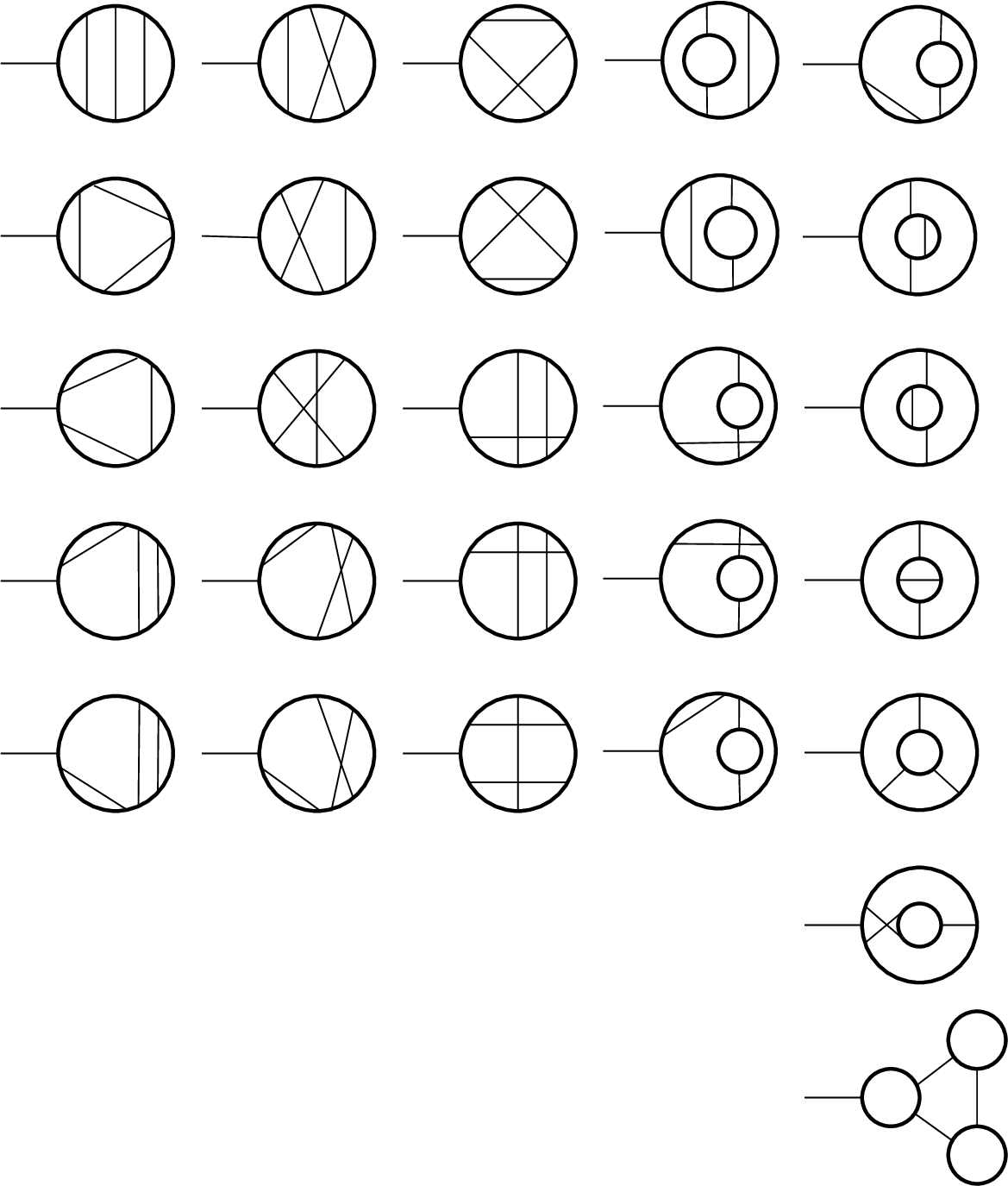}}
    \caption{The 27 \textit{1PI } tadpole graphs with loop number 4}
    \label{27tadpoles}
\end{figure}

The next lemma is a direct corollary to Lemma \ref{property1Yukawa} for the case of tadpoles.
\begin{lem}\label{property2yukawa}
For any Yukawa 1PI tadpole graph $\Gamma\in\mathcal{U}_{10}$ the following are true:

\begin{enumerate}
   
\item The number of vertices is equal to the number of fermion edges,
\[|V(\Gamma)|=f.\]

 \item The number of all boson edges (including the external boson leg) is equal to the loop number of the graph. That is,
\[p+1=\ell(\Gamma).\]

\item Fermion loops partition the set of vertices;
\end{enumerate}
 where, as before, $p$ is the number of internal boson edges (the $p$ is suitable in this case too as it stands for a \textit{pion}, pions  are often the bosons in a Yukawa interaction), and $f$ is the number of fermion edges (all fermion edges are internal in this case).

\end{lem}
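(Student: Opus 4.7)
The plan is to deduce all three items from Lemma~\ref{property1Yukawa} together with a standard half-edge (handshake) count on the $3$-valent vertices. Parts (1) and (3) are essentially immediate corollaries of the statement and the proof of Lemma~\ref{property1Yukawa}, and part (2) is a short degree-count computation once part (1) is in hand.

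For (1), I would simply specialize Lemma~\ref{property1Yukawa} to $j=0$: a tadpole $\Gamma \in \mathcal{U}_{10}$ has $i=1$ external boson leg and $j=0$ external fermion legs, so $|V(\Gamma)| = f + 0 = f$. For (3), I would revisit the auxiliary graph $\Gamma'$ used in the proof of Lemma~\ref{property1Yukawa}, obtained from $\Gamma$ by deleting all boson edges and half-edges. That argument shows $\Gamma'$ is a disjoint union of fermion cycles together with $j$ fermion paths; here $j=0$, so only fermion cycles remain. The vertex-type condition forces every vertex of $\Gamma$ to carry exactly one incoming and one outgoing fermion half-edge, so each vertex lies on exactly one cycle of $\Gamma'$, and the fermion loops therefore partition $V(\Gamma)$.

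For (2), I would count half-edges in two ways. Since every vertex is $3$-valent, the total number of half-edges emitted by vertices is $3|V(\Gamma)|$, while each internal fermion and boson edge contributes $2$ half-edges and the single external boson leg contributes $1$. This gives
\[3|V(\Gamma)| = 2f + 2p + 1.\]
Substituting $|V(\Gamma)| = f$ from part~(1) yields $f = 2p+1$. Since $\Gamma$ is 1PI, hence connected, its loop number is then
\[\ell(\Gamma) = (f+p) - |V(\Gamma)| + 1 = (f+p) - f + 1 = p+1.\]

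There is no real obstacle here; the only thing requiring care is to count the external boson leg consistently, namely as a single half-edge in the handshake identity but as no contribution to $|E(\Gamma)|$ (and hence none to the cycle rank) in the loop-number formula. Everything else is a direct application of the previous lemma.
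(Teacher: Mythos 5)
Your proof is correct and follows essentially the same route as the paper: part (1) by specializing Lemma \ref{property1Yukawa} to $j=0$, part (2) by Euler's formula for the connected graph $\Gamma$, and part (3) by observing that without external fermion legs every vertex lies on exactly one fermion cycle. The intermediate handshake identity $f=2p+1$ is a harmless extra (it is not actually used in your final Euler computation), but everything checks out.
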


\begin{proof}
By Lemma \ref{property1Yukawa}, we directly have $|V(\Gamma)|=f$.  Euler's formula now implies   \[1=|V(\Gamma)|-(p+f)+\ell(\Gamma)=f-(p+f)+\ell(\Gamma)=-p+\ell(\Gamma),\] for any $\Gamma\in\mathcal{U}_{10}$. Finally, since there are no external fermion edges, every fermion edge must be on a fermion loop. Thus all vertices are on fermion loops, and by the condition on the vertices in the theory, no vertex can lie on more than one fermion loop.
\end{proof}

Lemma \ref{property2yukawa} is useful in that we do not have to think about the loop number in proving the bijection to connected chord diagrams, and can instead focus on the more evident count of boson edges.

\begin{nota}\label{nota randv}
For a tadpole $T\in\mathcal{U}_{10}$, we will fix the notation that the external boson leg is denoted by $r_T$, and the vertex at the leg is denoted by $v_T$ (this is consistent with the notation used in the previous section). For a vertex $a \in V(T)$ we let Loop$(a)$ denote the unique fermion loop containing $a$, and we let Fermion$(a)$ be the fermion edge coming out of $a$ (i.e. the next on Loop$(a)$ counter-clockwise). Boson$(a)$ will denote the unique boson edge to which $a$ is incident.
\begin{center}

\raisebox{0cm}{\includegraphics[scale=0.5]{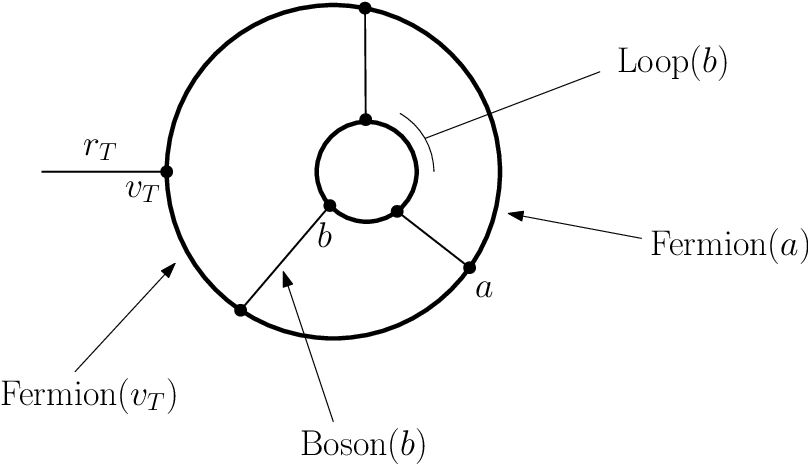}}
    
\end{center}

In the next proof we consider the free end of $r$ to be a vertex of degree 1. Then the number of vertices is twice the number of boson edges. 
\end{nota}

\begin{thm}\label{myresult1inYukawa}
The number of Yukawa 1PI tadpole graphs with loop number $n$ is equal to the number of connected chord diagrams on $n$ chords. In other words,
\[[\hbar^n]\left.\partial_{\phi_c}^1(\partial_{\psi_c}\partial_{\bar{\psi}_c})^0G^{\text{Yuk}}\right|_{\phi_c=\psi_c=0}=C_n.\]

\end{thm}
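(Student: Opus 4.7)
The plan is to construct an explicit bijection $\Phi$ between Yukawa 1PI tadpoles of loop number $n$ and rooted connected chord diagrams on $n$ chords. By Lemma \ref{property2yukawa}, a tadpole $T\in\mathcal{U}_{10}$ with $n$ loops has $|V(T)|=2n-1$ vertices and exactly $n$ boson edges (the external leg $r_T$ and $n-1$ internal bosons); adjoining the free end of $r_T$ as a distinguished node produces $2n$ ``nodes'', matching the $2n$ chord-endpoints of a size-$n$ diagram. Under $\Phi$, each boson edge of $T$ becomes a chord and $r_T$ becomes the root chord, so only the linear ordering of the $2n$ nodes needs to be specified.

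The ordering I propose arises from a canonical walk $\mathcal{W}_T$ that starts at $v_T$ (placed at position $1$) and threads through the fermion loops of $T$ in a recursive, depth-first fashion guided by the boson edges. Informally, $\mathcal{W}_T$ traces Loop$(v_T)$ counter-clockwise via Fermion$(\cdot)$-steps, and whenever it reaches a vertex $a$ whose Boson$(a)$ points to an unvisited vertex $b$, it interrupts the current loop, enters Loop$(b)$, recurses on the sub-tadpole rooted at Boson$(a)$, and then resumes along Loop$(v_T)$ once the sub-walk returns. The free end of $r_T$ is emitted at the position where the walk closes the root chord. A small sanity check at $n=2$: the unique triangle tadpole yields the order $v_T,v_2,\text{free end},v_3$ and hence the crossing chords $\{1,3\}$ and $\{2,4\}$, which is the unique connected chord diagram on two chords.

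With $\Phi$ defined, the key verifications are: (i) $\mathcal{W}_T$ visits each of the $2n$ nodes exactly once, an easy consequence of Lemma \ref{property2yukawa}(3); (ii) $\Phi(T)$ is a connected chord diagram, which encodes the 1PI (2-edge-connected) property of $T$ as the absence of a partition of chords into two non-crossing blocks, since any such partition would reveal a separating boson edge; and (iii) there is a well-defined inverse $\Psi$ that, given a connected chord diagram $D$, recovers the $n$ boson edges directly from the chords and reconstructs the fermion loop partition via the nesting and crossing pattern of chords along the linear order.

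The main obstacle is step (iii) together with the consistency of $\Phi$ and $\Psi$: one must show that the fermion loop partition can be read off unambiguously from the chord diagram and that reversing the walk yields a 1PI tadpole rather than some graph violating the vertex or bridge conditions. A useful alternative route, bypassing some of the walk bookkeeping, is to extract a shared recursion: removing the root chord on the chord-diagram side, versus pruning Loop$(v_T)$ and recursing on the sub-tadpoles hanging from its bosons on the tadpole side, and then matching both against the functional identities for $C(x)$ in Lemma \ref{cd}. This would give an alternative generating-function proof that the number of tadpoles in $\mathcal{U}_{10}$ with loop number $n$ equals $C_n$, independently of the combinatorial details of $\mathcal{W}_T$.
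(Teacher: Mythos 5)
Your counting setup is fine: by Lemma \ref{property2yukawa} a tadpole of loop number $n$ has $n$ boson edges and, with the free end of $r_T$ adjoined, $2n$ nodes, so ``boson edges $\to$ chords'' is the right dictionary. But the core of your argument, the canonical walk $\mathcal{W}_T$, is not well-defined, and the gap it leaves is exactly the hard part of the theorem. A Yukawa 1PI tadpole is bridgeless, so no internal boson edge separates off a ``sub-tadpole rooted at Boson$(a)$'': the portion of the graph entered through $b$ is in general reattached to Loop$(v_T)$ by further boson edges, so there is no self-contained piece on which to ``recurse and then resume.'' Even if you replace this by some genuine depth-first order on the $2n$ nodes, your items (ii) and (iii) --- that the resulting diagram is connected precisely when $T$ is 2-edge-connected, and that the fermion-loop partition can be recovered from the crossing pattern --- are the entire content of the theorem and are only asserted. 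Viewing fermion loops as map vertices and bosons as map edges, the statement is that rooted bridgeless combinatorial maps with $n$ edges are equinumerous with connected chord diagrams on $n$ chords (cf.\ the paper's remark citing \cite{con}); no simple traversal order is known to realize this, and the known bijections are recursive rather than walk-based. The $n=2$ check does not probe any of these difficulties.

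The alternative route you sketch in your last paragraph is the one the paper actually takes, but the shared recursion you propose (delete the root chord versus prune Loop$(v_T)$ into sub-tadpoles ``hanging from its bosons'') does not produce the identity $2xC(x)C'(x)=C(x)^2+C(x)-x$ of Lemma \ref{cd}: deleting the root of a connected diagram can leave many components, and, again, nothing hangs off the bosons of a bridgeless graph. The paper instead engineers a reversible algorithm $\Psi:\mathcal{U}_{10}\times\mathcal{U}_{10}^{\bullet}\to(\mathcal{U}_{10}\times\mathcal{U}_{10})\cup(\mathcal{U}_{10}-\{\mathcal{X}\})$ that merges a tadpole $T_1$ with a pointed tadpole $(T_2,d)$ using two boson attachments placed so that the seam is recoverable; the inverse locates the seam by deleting the boson edge next to the leg and hunting for the resulting bridges with a while-loop. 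This shows $T(x)$ satisfies the same recurrence $2xT(x)T'(x)=T(x)^2+T(x)-x$ as $C(x)$, and only afterwards is an explicit bijection assembled recursively, via the root-share decomposition and the $\Psi$-order of Definition \ref{psiorder}, rather than via a single canonical walk. To repair your proof you would need either to carry out a reversible surgery of this kind, or to specify your walk and its inverse completely and prove the connectivity correspondence --- which is a substantial construction, not a verification.
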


\begin{proof}
Let $T(x)$ be the generating series for tadpoles in $\mathcal{U}_{10}$, counted by the number of boson edges (including the external boson leg). We are taking advantage of Lemma \ref{property2yukawa} in order to use the number of bosons instead of the loop number. The theorem shall be proven through an algorithm that shows that  $T(x)$ obeys the same recurrence as $C(x)$ (see Lemma \ref{cd}), namely
\[2xT(x)T^{\prime}(x)=T(x)^2+T(x)-x.\]
First notice that the LHS stands for two tadpole diagrams, one of which has a distinguished end point of one of the boson edges (hence the 2 factor). For simplicity, we will treat the free end of an external boson leg as a vertex. Then let $\mathcal{U}_{10}^\bullet$ be the class of tadpoles with a distinguished vertex. Also let $\mathcal{U}_{10}-\{\mathcal{X}\}$ be the class of tadpoles excluding $\mathcal{X}$, the tadpole with one vertex ($\mathcal{X}$ has only the boson leg).

Let $T_1,T_2$ be tadpole graphs in $\mathcal{U}_{10}$, and assume that $T_2$ has a distinguished vertex $d$. By Notation \ref{nota randv}, we let $r_1$ and $r_2$ be the boson legs of  $T_1$ and $T_2$, and we let $v_1$ and $v_2$ be the 3-valent vertices incident to $r_1$ and $r_2$, respectively. Figure \ref{T1T2figure} illustrates the notation.

\begin{figure}[h!]
    \centering
    \raisebox{0cm}{\includegraphics[scale=0.5]{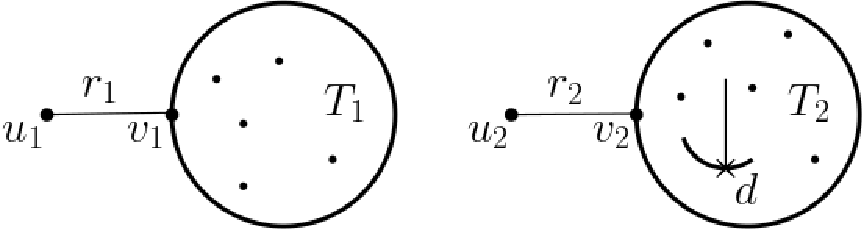}}
    \caption{Notation for $(T_1,(T_2,d))$}
    \label{T1T2figure}
\end{figure}

\setlength{\parindent}{0cm}
Now we can describe the reversible algorithm as follows:

\rule{\textwidth}{0.4pt}
\textbf{Algorithm $\Psi$:} \quad$(\mathcal{U}_{10}\times\mathcal{U}_{10}^\bullet)\longrightarrow (\mathcal{U}_{10}\times\mathcal{U}_{10})\;\bigcup\; (\mathcal{U}_{10}-\{\mathcal{X}\})$\\
\rule{\textwidth}{0.4pt}

\textbf{Input:} $(T_1,(T_2,d)) \in(\mathcal{U}_{10}\times\mathcal{U}_{10}^\bullet)$, with notation as described above.\\

 (a) If $d=u_2$ just \textbf{return} $(T_1,T_2)$.\\
 
 (b) If $d\neq u_2$, do the following: 
 
 Move (counter-clockwise) along Loop$(v_1)$ in $T_1$, determine Fermion$(v_1)$ and let $w$ be the first vertex met on the loop. Note that $w$ may be $v_1$ itself.
 
    \begin{enumerate}
  \item  If $w=v_1$, i.e. $T_1$ contains no internal boson edges, \textbf{return} the tadpole $T$ obtained as follows:
  
  \begin{enumerate}
      \item[(i)] Insert vertex $v_1$ together with the leg $r_1$ into Fermion$(d)$ in $T_2$ by making a subdivision of Fermion$(d)$.
      \item[(ii)] Insert $u_2$ into the new Fermion$(v_1)$ on Loop$(d)$.
  \end{enumerate}
  
  \item If $w\neq v_1$, \textbf{return} the tadpole $T$ obtained as follows:
  \begin{enumerate}
      \item[(i)] Insert $u_2$ into  Fermion$(v_1)$ in $T_1$. 
      \item[(ii)] Detach $w$ from Loop$(v_1)$ and insert it into Fermion$(d)$ in $T_2$.
  \end{enumerate}
 
        \end{enumerate}
     \rule{\textwidth}{0.4pt}       
     
Figures \ref{Psi1} and \ref{Psi2} illustrate the two cases for (b).

\begin{figure}[h!]
    \centering
    \includegraphics[scale=0.6]{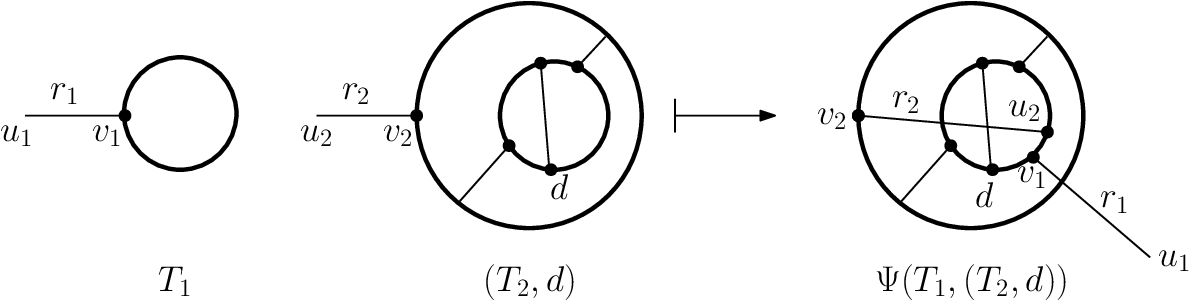}
    \caption{$\Psi(T_1,(T_2,d))$ when $T_1$ has exactly one vertex, the case $w=v_1$.}
    \label{Psi1}
\end{figure}

    \begin{figure}[h!]
    \centering
    \includegraphics[scale=0.56]{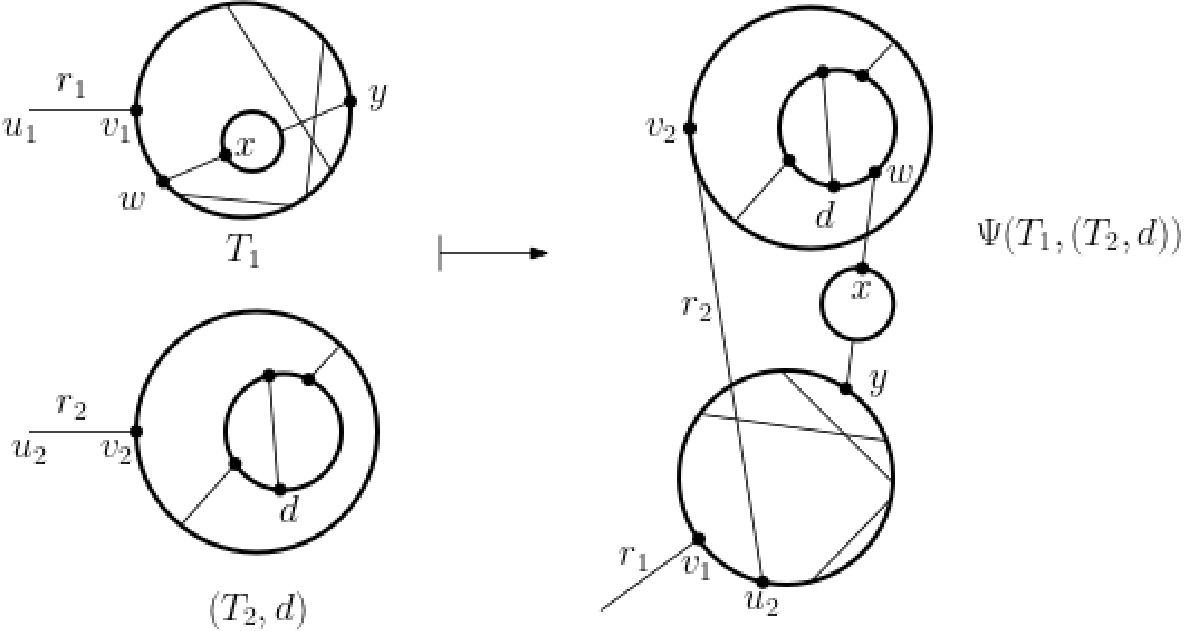}
    \caption{$\Psi(T_1,(T_2,d))$ for a general $T_1$, the case $w\neq v_1$.}
    \label{Psi2}
\end{figure}

For the reverse process we can devise the following algorithm.

\rule{\textwidth}{0.4pt}
\textbf{Algorithm $\Psi^{-1}$:} \quad$ (\mathcal{U}_{10}\times\mathcal{U}_{10})\;\bigcup\; (\mathcal{U}_{10}-\{\mathcal{X}\})\longrightarrow(\mathcal{U}_{10}\times\mathcal{U}_{10}^\bullet)$\\
\rule{\textwidth}{0.4pt}

\textbf{Input:} $(T_1,T_2)\in(\mathcal{U}_{10}\times\mathcal{U}_{10})$,\; or \;$ T \in \mathcal{U}_{10}-\{\mathcal{X}\}$\\

 (a) If the input is a pair $(T_1,T_2)$ then \textbf{return} $(T_1,(T_2,v_2))$.\\
 
 (b) If the input is a tadpole graph $T\in\mathcal{U}_{10}-\{\mathcal{X}\}$, do the following: 
 \begin{enumerate}
 \item Move (counter-clockwise) along Loop$(v_T)$, determine the first vertex $a$ met on the loop. Note that $a\neq v_T$  since we are excluding the tadpole with a single vertex.

  \item  Determine the other end vertex of Boson$(a)$ and denote it by $v_2$.
  
  \item Remove the vertex $a$ from $T$ and keep the resulting boson leg attached at $v_2$. Let the resulting graph be denoted by $\Gamma$.
  
  \item Check whether $\Gamma$ contains a bridge:
  
  \begin{enumerate}
      \item[Case 1:] If $\Gamma$ is 2-edge-connected (i.e. contains no bridges) do
      \begin{enumerate}
          \item[(1)] Determine the first vertex on Loop$(v_T)$ before $v_T$, denote it by $d$.
          \item[(2)] Remove $v_T$ and its boson leg $r_T$. Denote the remaining tadpole by $T_2$.
          
          \item[(3)] \textbf{Return} $(\mathcal{X}, (T_2,d))$.
      \end{enumerate}

\item[Case 2:] If $b_0$ is a bridge (must be a boson edge) in $\Gamma$,  undergo the following:
      \begin{enumerate}
        
    \item[(1)] Set $G=\Gamma$ and $b=b_0$. 
    \item[(2)] \{\textbf{while} $G$ has a bridge $g$ \textbf{do}
      \begin{enumerate}
      \item reset $b\longleftarrow g$;
      \item determine the component $\gamma$ that contains $v_2$ if $b$ is removed; 
      \item reset $G\longleftarrow \gamma$.\}
      \end{enumerate}
   \item[(3)] Let $w$ be the end vertex of $b$ that lies in $G$. Notice that, after the while-loop, $G$ contains no bridges. 
    \item[(4)] Determine the first vertex on Loop$(w)$ before $w$, denote it by $d$.
    \item[(5)] Detach $w$ from Loop$(d)$ in $G$ and insert into Fermion$(v_T)$ (i.e. next to $v_T$ on Loop$(v_T)$).
    \item[(6)] Let $T_2=G-{w}$ be the tadpole obtained from $G$ after $w$ is removed.
    \item[(7)] Let $T_1$ be that tadpole obtained in $\Gamma-G$ after $w$ is inserted on Loop$(v_T)$.
    \item[(8)] \textbf{Return} $(T_1,(T_2,d))$.
      \end{enumerate}
  \end{enumerate}
  \end{enumerate}
 \rule{\textwidth}{0.4pt}
Before discussing the algorithms, the reader may like to see Example \ref{examplePsiinverse} for applying $\Psi^{-1} $ to $\Psi(T_1,(T_2,d))$ from Figure \ref{Psi2}.

Now, for Algorithm $\Psi$, the two cases (a) and (b) are clearly distinguishable by the types of their outputs, so Let us focus on (b). 

\begin{enumerate}
    \item The special case in ($\Psi$ b:(1)) when $w=v_1$ returns a tadpole without bridges. Indeed, we only added an external leg $r_1$ at the position determined by $d$, and then we inserted the free end $u_2$ (the external leg of $T_2$). None of these steps changes the connectivity of $T_2$, and the return value is indeed in $\mathcal{U}_{10}-\{\mathcal{X}\}$. 
    
    \item In ($\Psi$ b:(2)), when $w\neq v_1$, the result has no bridges, since what we do is roughly joining $T_1$ and $T_2$ by means of two boson edges in a certain way. Thus the result is indeed a tadpole in $\mathcal{U}_{10}-\{\mathcal{X}\}$. Notice that the  first of these joints is attached next to the leg, and its removal leaves the graph with a bridge. This is of most importance in the reverse process.
 
\end{enumerate}
 
 Then, for Algorithm $\Psi^{-1}$, we have the following:
 
 \begin{enumerate}
     \item If the input is a pair, then this uniquely means that the distinguished vertex satisfies $d=u_2$.
     \item If the input is a tadpole that stays bridgeless after the the vertex $a$ next to $v_T$ is  removed then this uniquely means that $T_1=\mathcal{X}$. Indeed, we have seen above that in all the other cases we get a bridge if the first boson edge after the external leg is removed. 
     
     \item If the input reveals a bridge $b_0$ when Boson$(a)$ is removed, then we learn that $a$ and Boson$(a)$ formed the external leg of $T_2$ and we start disentangling $T_2$ from the graph.   Roughly speaking, we need to determine $d$ by using the fact that, in the absence of Boson$(a)$, Boson$(d)$ is a bridge coming from $T_1$. The while loop in the algorithm works on finding the last such bridge.
     
     \item By the engineering of the while-loop, the graph $G$ obtained at the end of the loop has no more bridges, besides, it carries the traces of the last bridge $b$, which determines the distinguished vertex $d$. 
     
     \item After modifying $G$ by removing $w$ we get $(T_2,d)$, and simultaneously we get $T_1$ from the remaining graph by attaching $w$ into Fermion$(v_T)$. By doing so, $T_1$ is also bridgeless.
 \end{enumerate}
This proves the theorm.
\end{proof}
\begin{exm}\label{examplePsiinverse}
Let us apply Algorithm $\Psi^{-1}$ to the tadpole $T$ given in Figure \ref{Psi2} by 
\begin{center}
    \raisebox{0cm}{\includegraphics[scale=0.4]{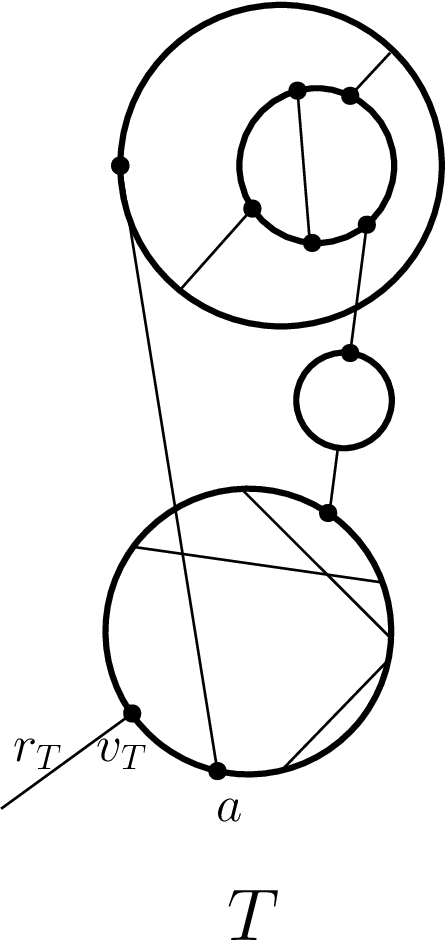}}
\end{center}

\begin{enumerate}
    \item The input is a tadpole in $\mathcal{U}_{10}-\{\mathcal{X}\}$ and so we apply (b).
    \item We determine $a$ as the vertex next to $v_T$ on Loop$(v_T)$, and with it we determine $v_2$, the other end of Boson$(a)$.
    \item We remove vertex $a$ from $T$ and keep Boson$(a)$ attached at $v_2$ as in the figure below.
    
    \begin{center}
    \raisebox{0cm}{\includegraphics[scale=0.56]{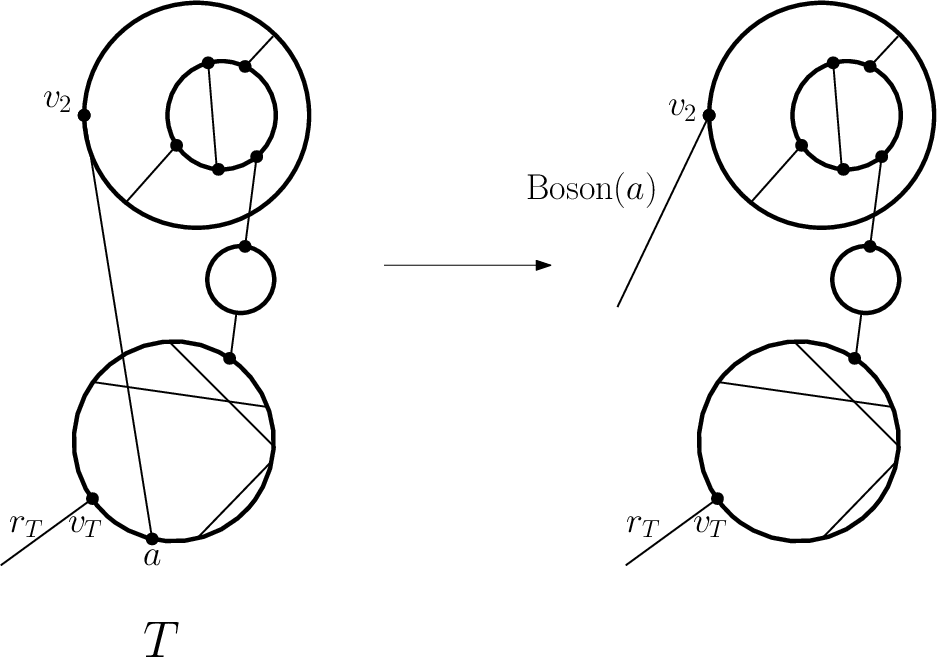}}
\end{center}

\item We check for bridges and we find one of them, we assume that our search provided the bridge $b_0$.
\begin{center}
    \raisebox{0cm}{\includegraphics[scale=0.45]{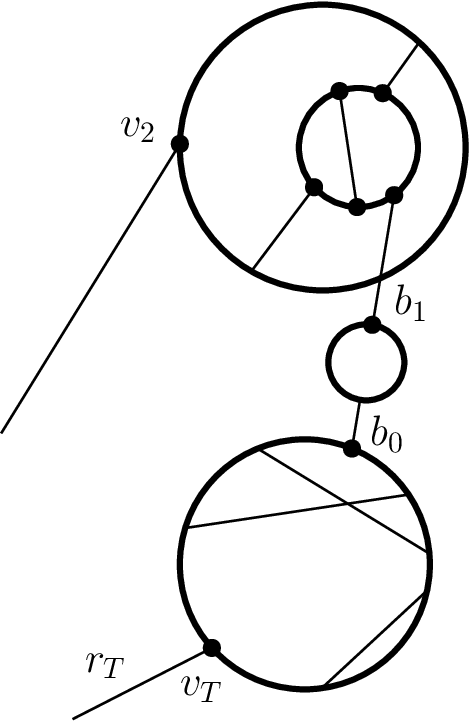}}
\end{center}

\item We enter the while loop with $G=\Gamma$ given above and $b=b_0$:\begin{enumerate}
    \item After the first iteration $G$ is modified to be 
    \begin{center}
    \raisebox{0cm}{\includegraphics[scale=0.53]{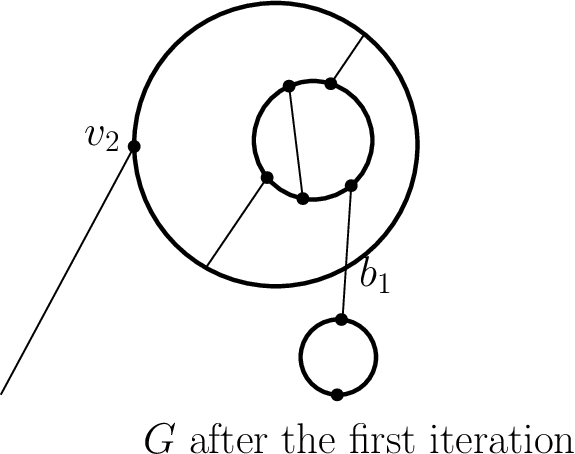}}\end{center}
    \vspace{0.3cm}
    \item Now $G$ again has a bridge $b$, and, after the second iteration $G$ has no more bridges. After the while-loop $b=b_1$ and $G$ is given by

\begin{center}
    \raisebox{0cm}{\includegraphics[scale=0.53]{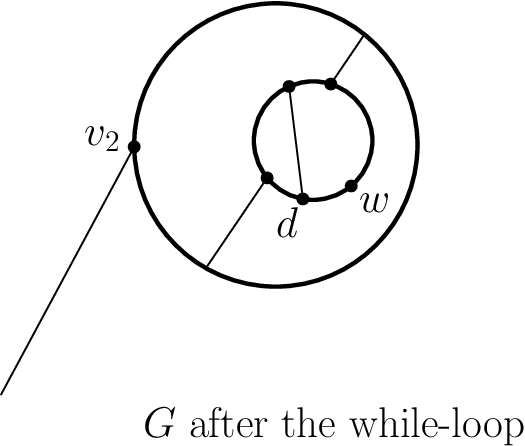}}\end{center}
    \vspace{0.3cm}
    
    \item Finally, detach $w$ from $G$, set $T_2=G-w$, reset $G=G-w$, and insert $w$ next to $v_T$ in $\Gamma-G$ to get $T_1$. The result is shown in the figure below.
\end{enumerate} 

    \raisebox{0cm}{\includegraphics[scale=0.55]{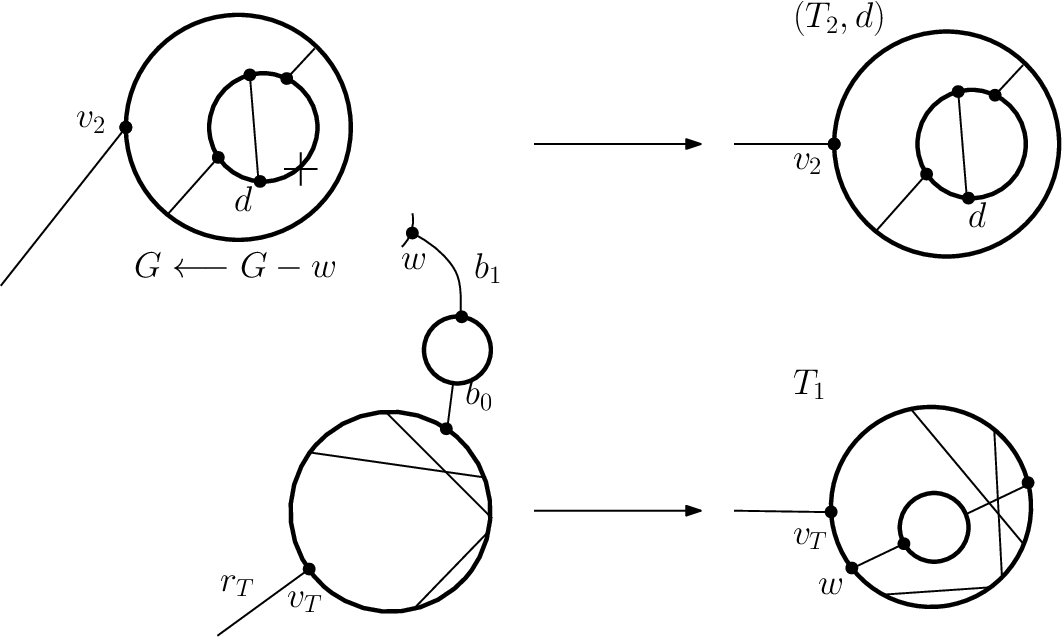}}
\end{enumerate}

\end{exm}

Theorem \ref{myresult1inYukawa} tells us that the two structures, Yukawa 1PI tadpoles and connected chord diagrams satisfy the same recurrence and hence there exists a bijection between the two classes obtained recursively. However, we still have to do one more bit of work to express this bijection. The bijection should respect the sizes, that is, a Yukawa 1PI tadpole with $n$ boson edges (with $n$ loops) should be mapped to a connected chord diagram on $n$ chords.

Moreover, as we can see, the vertices of a tadpole should correspond to the vertices in a connected chord diagram, and the fermion edges should accordingly correspond to the intervals. It has not been made clear so far  how we can order fermion edges in a way that resembles the natural linear order of the intervals in chord diagrams, an  order that is compatible with the decomposition in Theorem \ref{myresult1inYukawa}. Definition \ref{psiorder} below addresses this issue. To see why the order should be defined this way, we have to first recall the root share decomposition of connected chord diagrams. The root share decomposition has been used in the proof of Lemma \ref{cd}, and now we need to define it properly:

\begin{dfn}[Root Share Decomposition]
The \textit{root share decomposition} is the map $\nabla: \mathcal{C}\longrightarrow \mathcal{C}\times(\mathcal{C},\mathbb{N})$ defined by 

\[\nabla C=(C_1,(C_2,k)),\]

where $1\leq k\leq |C_2|-1$, and $C_1,C_2$ are obtained as follows:
Among the components produced by removing the root of $C$, $C_2$ is taken to be the first in intersection order with the root. $k$ determines the interval in $C_2$ through which the root used to pass. $C_1$ is then obtained by removing the chords of $C_2$ from $C$. 

If $(C_1,(C_2,k))\in\mathcal{C}\times(\mathcal{C},\mathbb{N})$ is a valid triplet then $\nabla^{-1}(C_1,(C_2,k))$ is the connected chord diagram obtained by placing $C_1$ in the $k$th interval of $C_2$ and pulling the root of $C_1$ out to become the root of the whole diagram (i.e. place it to the left of the root of $C_2$). See Figure \ref{roootshare}.
\end{dfn}

\begin{figure}[h!]
    \centering
    \includegraphics[scale=0.86]{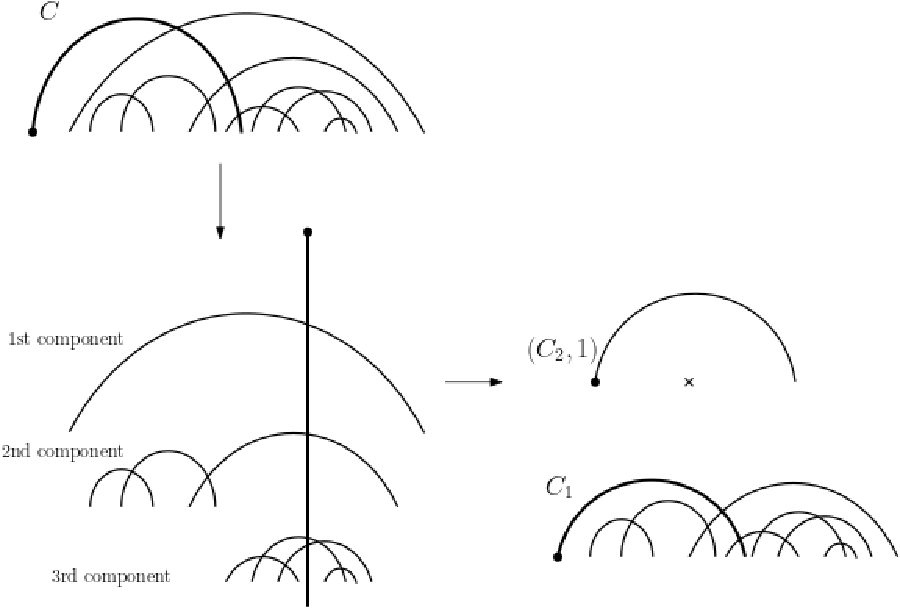}
    \caption{Root share decomposition of a connected chord diagram}
    \label{roootshare}
\end{figure}

The effect of the root share decomposition on the linear order of the intervals in $C_1$ and $C_2$ leads us to the following order on fermion edges in a Yukawa 1PI tadpole.

\begin{dfn}[The $\Psi$-order]\label{psiorder}
We define the \textit{$\Psi$-order} on the fermion edges in a Yukawa 1PI tadpole inductively on the size $n$ of the tadpoles as follows:

 $\ast$ For a fermion edge $e$ in a tadpole $T$, its  $\Psi$-order takes values in $\mathbb{N}$ and is to be denoted by $\psi_T(e)$.
\begin{itemize}
   
    \item For $\mathcal{X}$ the unique fermion edge is ordered as 1.
    \item Assume all tadpoles of size less than $n$ are ordered and let $T$ be a tadpole of size $n$. To order $T$ do the following:
    \begin{enumerate}
        \item Apply $\Psi^{-1}$ to $T$ to determine a triplet $(T_1,(T_2,d))$. As before, let $v_T$, $v_1$ and $v_2$ be the leg vertices in $T$, $T_1$ and $T_2$ respectively. Let $w_1$ be the vertex next to $v_1$ in $T_1$. Also let $w_T$ be the vertex in $T$ next to $v_T$, and let $w_d$ be the vertex in $T$ next to the vertex $d$ (i.e. these are the vertices of subdivisions created by $\Psi$). Note that $w_d=v_T$ if $T_1=\mathcal{X}$.
        
        \item By the induction hypothesis it is assumed that we know the $\Psi$-ordering of $T_1$ and $T_2$. Let \[M=\max_{e\in T_1} \psi_{T_1}(e).\]
        
        \end{enumerate}
        \begin{enumerate}
        \item[Case 1:]  $T_1=\mathcal{X}$. Set $\psi_T(\text{Fermion}(v_T))=1$, and for any other fermion edge $e$ in $T$ define $\psi_T(e)$ such that
        
        \begin{enumerate}
            \item $\psi_T(e)=\psi_{T_2}(e)+1$ if $ e\in T_2$ and $\psi_{T_2}(e)< \psi_{T_2}(\text{Fermion}(d))$.
            
            \item $\psi_T(\text{Fermion}(d))=\psi_{T_2}(\text{Fermion}(d))+1$ \quad (in $T$ this is the $dv_T$ edge).

            \item $\psi_T(\text{Fermion}(w_T))=\psi_{T_2}(\text{Fermion}(d))+2$.
            
            \item $\psi_T(e)=\psi_{T_2}(e)+2$ if $e\in T_2$.
            \end{enumerate}
        
        \item[Case 2:] $T_1\neq\mathcal{X}$. Set $\psi_T(\text{Fermion}(v_T))=1$, and for any other fermion edge $e$ in $T$ define $\psi_T(e)$ such that
        
        \begin{enumerate}
            \item $\psi_T(e)=\psi_{T_2}(e)+1$ if $ e\in T_2$ and $\psi_{T_2}(e)< \psi_{T_2}(\text{Fermion}(d))$.
            
            \item $\psi_T(\text{Fermion}(d))=\psi_{T_2}(\text{Fermion}(d))+1$ \quad (in $T$ this is the $dw_2$ edge).

            \item $\psi_T(\text{Fermion}(w_T))=\psi_{T_1}(\text{Fermion}(w_1))+\psi_{T_2}(\text{Fermion}(d))$.

            \item $\psi_T(e)=\psi_{T_1}(e)+\psi_{T_2}(\text{Fermion}(d))$ \quad if $e\in T_1$. 
            
            \item $\psi_T(\text{Fermion}(w_d))=M+\psi_{T_2}(\text{Fermion}(d))+1$.
            
            \item $\psi_T(e)=\psi_{T_2}(e)+M+1$ if $e\in T_2$.
            
        \end{enumerate}
    \end{enumerate}
\end{itemize}

\end{dfn}

\begin{exm}\label{psiorderexm}
Two examples of the $\Psi$-order are given in Figures \ref{psiorderexm44} and \ref{psiorderexm2}. In Figure \ref{psiorderexm3} we give the corresponding chord diagram for the tadpole in Figure \ref{psiorderexm2}. It is worth noticing how the two orders are constructed similarly.

\begin{figure}[h!]
    \centering
    \includegraphics[scale=0.5]{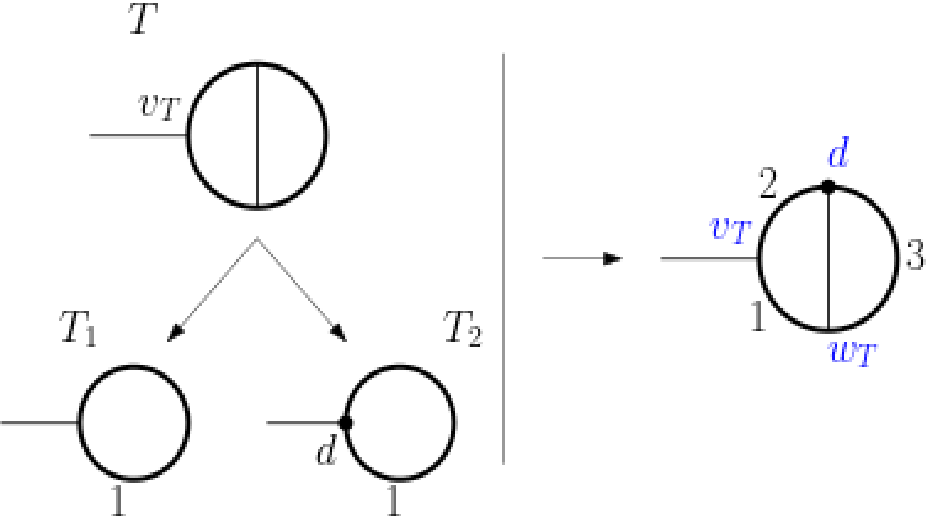}
    \caption{An example of the $\Psi$-order Case 1.}
    \label{psiorderexm44}
\end{figure}

\begin{figure}[h!]
    \centering
    \includegraphics[scale=0.66]{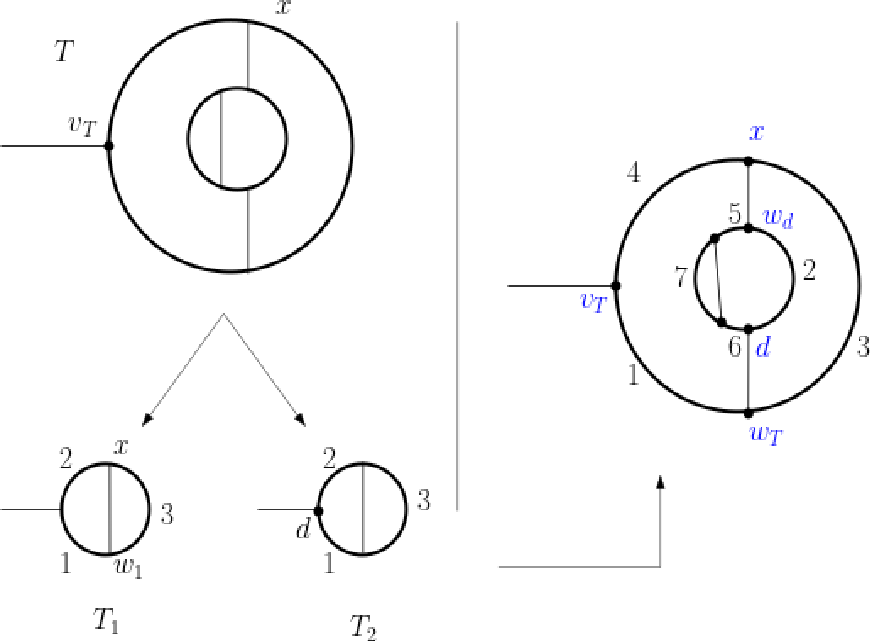}
    \caption{An example of the $\Psi$-order Case 2.}
    \label{psiorderexm2}
\end{figure}
\begin{figure}[h]
    \centering
    \includegraphics[scale=0.7]{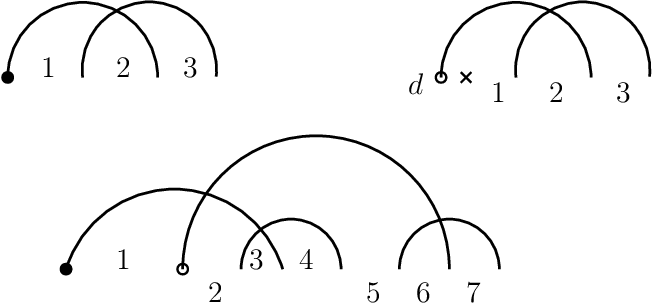}
    \caption{The corresponding chord diagrams for the graphs of Example \ref{psiorderexm}, in Figure \ref{psiorderexm2}.}
    \label{psiorderexm3}
\end{figure}

\end{exm}

Now we can finally express the bijection between Yukawa 1PI tadpoles and connected chord diagrams.

If we use a vertex to indicate an interval in a chord diagram, then we mean, as usual, the interval to the right of the vertex in the linear representation. Analogously we use the fermion edge that comes next to a vertex in counter-clockwise direction. Also note that the interval in the root share decomposition can not be the rightmost interval of the diagram. 

\subsubsection{An Explicit Bijection}

\begin{cor}\label{myresult2inYukawa}
Theorem \ref{myresult1inYukawa} can be used to give an explicit bijection $\Lambda$ between Yukawa theory 1PI tadpoles in (the class $\mathcal{U}_{10}$) and \;$\mathcal{C}$, the class of connected chord diagrams. Namely, $\Lambda: \mathcal{U}_{10}\longrightarrow \mathcal{C}$ is defined recursively as follows:

    \[\Lambda\big(\mathcal{X}\big)\;=\;\raisebox{0cm}{\includegraphics[scale=0.5]{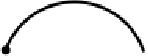}}\;;\qquad\text{and}\quad\]
    \[
    \Lambda(T)=\nabla^{-1}\big(\Lambda(T_1),(\Lambda(T_2),\psi(d)\big),
    \]
    
 where $\Psi^{-1}(T)=(T_1,(T_2,d))$, and, as defined earlier, $\mathcal{X}=\raisebox{-0.25cm}{\includegraphics[scale=0.5]{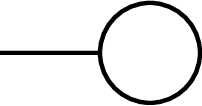}}$.
\end{cor}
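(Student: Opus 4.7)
The plan is to argue by induction on $n = |T|$, the number of boson edges of the tadpole (equivalently the number of chords of $\Lambda(T)$, by Lemma \ref{property2yukawa}). The base case $n=1$ is immediate: $\mathcal{X}$ is the only Yukawa 1PI tadpole of size one and is declared to map to the single-chord diagram.

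For the inductive step on a tadpole $T$ of size $n \geq 2$, Theorem \ref{myresult1inYukawa} furnishes $\Psi^{-1}(T) = (T_1,(T_2,d))$ with $|T_1| + |T_2| = n$ and both strictly smaller than $n$. The inductive hypothesis supplies connected chord diagrams $\Lambda(T_1)$ and $\Lambda(T_2)$ of the correct sizes, so to legally apply $\nabla^{-1}$ we must only verify that $\psi(d) := \psi_{T_2}(\text{Fermion}(d))$ is an admissible interval index in $\Lambda(T_2)$. This reduces the corollary to establishing the following auxiliary claim, proved by a parallel induction: the $\Psi$-order $\psi_T$ on the fermion edges of $T$ coincides with the linear order on intervals of $\Lambda(T)$ under the natural identification in which a vertex of the tadpole corresponds to a vertex of the chord diagram and its outgoing fermion edge corresponds to the interval immediately to its right.

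Granting the auxiliary claim, bijectivity of $\Lambda$ follows by writing its inverse explicitly:
\[
\Lambda^{-1}(C) = \Psi\bigl(\Lambda^{-1}(C_1),(\Lambda^{-1}(C_2), d)\bigr),
\]
where $\nabla(C) = (C_1,(C_2,k))$ and $d$ is the unique vertex of $\Lambda^{-1}(C_2)$ whose outgoing fermion edge has $\Psi$-order $k$. That $\Lambda^{-1}\circ\Lambda = \mathrm{id}$ and $\Lambda\circ\Lambda^{-1} = \mathrm{id}$ is then a direct consequence of the bijectivity of $\Psi$ (Theorem \ref{myresult1inYukawa}), of $\nabla$ (the root share decomposition), and of $\psi_{T_2}$ on fermion edges of $T_2$.

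I expect the principal obstacle to be the auxiliary claim. Its verification requires tracking, case by case in Definition \ref{psiorder}, how inserting $T_1$ into $T_2$ at the vertex $d$ redistributes fermion edges, and checking that this matches how $\nabla^{-1}$ relabels intervals when placing $\Lambda(T_1)$ into interval $k$ of $\Lambda(T_2)$ and pulling its root out to the left. The four numbered sub-cases of Case 1 (when $T_1 = \mathcal{X}$, so that only a single chord is inserted) and the six sub-cases of Case 2 are engineered precisely for this matching, mirroring the way the root share construction splits the intervals of $\Lambda(T)$ into the block contributed by $\Lambda(T_1)$ and the block contributed by $\Lambda(T_2)$, with the root chord of $\Lambda(T_1)$ inserted in the $k$-th position. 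Once this index bookkeeping is confirmed, the corollary follows.
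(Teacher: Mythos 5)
Your proposal follows essentially the same route as the paper: the paper's proof is a two-sentence assertion that $\Lambda$ is well defined by the uniqueness of the root share decomposition and is a bijection because $\psi$, $\Psi$, and $\nabla$ are bijections, which is exactly the composite-of-bijections argument you formalize by induction on size. You are in fact more explicit than the paper about where the real work hides (verifying that $\psi_{T_2}(\mathrm{Fermion}(d))$ is an admissible, non-rightmost interval index and that the $\Psi$-order matches the interval order of $\Lambda(T_2)$), a point the paper absorbs into the bare claim that $\psi$ is a bijection.
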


\begin{proof}
The proof is straightforward from the definitions of the maps involved. The map is well defined by the uniqueness of the root share decomposition, and is a bijection since $\psi$, $\Psi$, and $\nabla$ are bijections. 
\end{proof}

\begin{exm}\label{completeexmpsi}
Figure \ref{complete} below illustrates all the steps from a tadpole $T$ to its corresponding connected chord diagram. Namely, as the corollary states, given a Yukawa 1PI tadpole $T$, the process can be described as follows:
\begin{enumerate}
    \item Use $\Psi^{-1}$ to decompose $T$ all the way down to copies of $\mathcal{X}$, with extra information about positions $d_i$ at each step
    \item Go up again step by step and use the recursive definition of $\psi$ to order all the fermion edges in the graph.
    \item Again start from the bottom to create the corresponding chord diagrams using $\nabla^{-1}$ and the values $\psi(d)$. The last insertion up gives $\Lambda(T)$.
\end{enumerate}

\begin{figure}[h]
    \centering
    \includegraphics[scale=0.7]{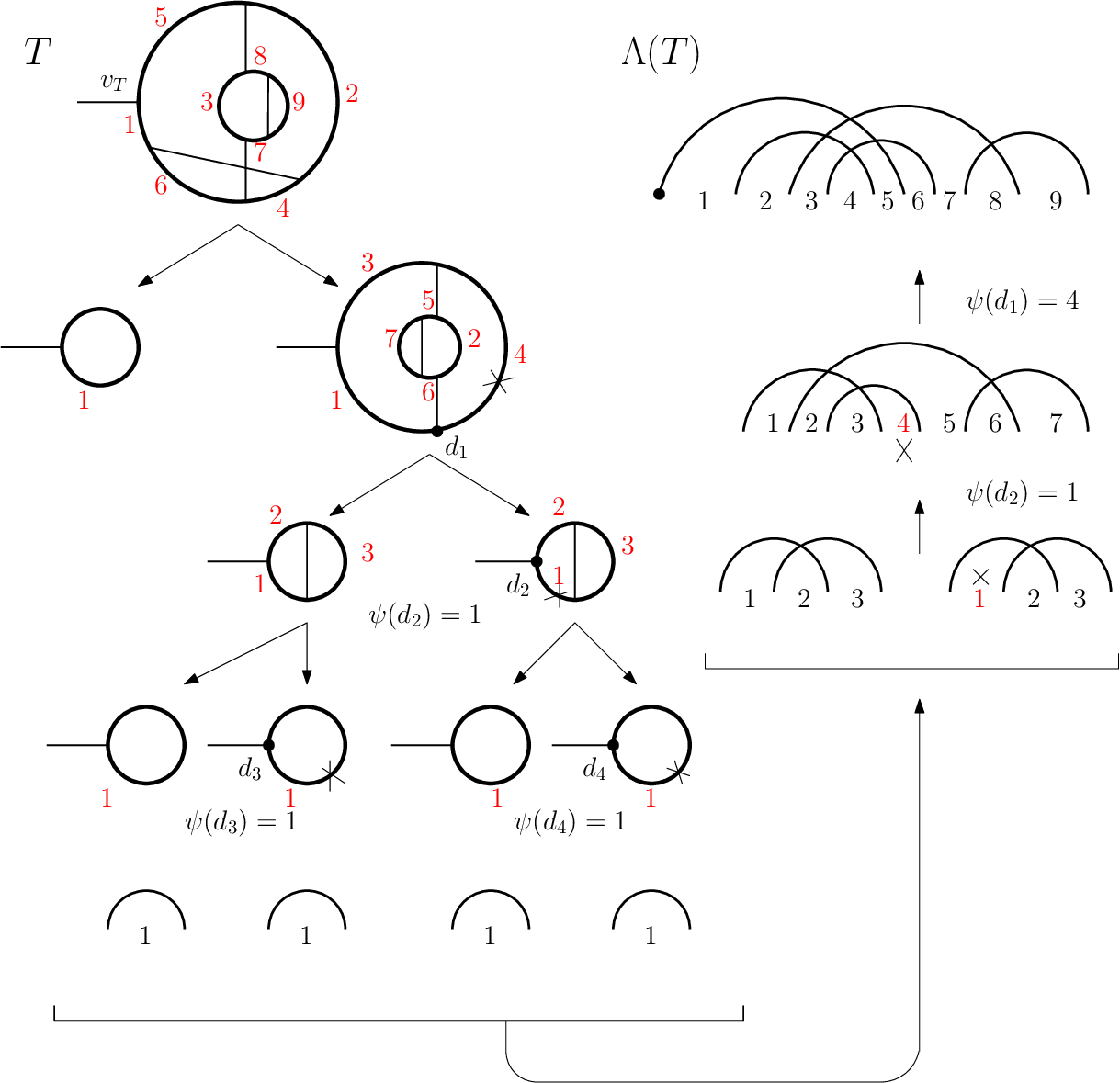}
    \caption{A complete example of the recursive calculation of $\Lambda$.}
    \label{complete}
\end{figure}
\end{exm}

\begin{rem}
It is surprising that, in light of the result in \cite{con}, the bijection between Yukawa 1PI tadpoles and connected chord diagrams gives a bijection between Yukawa 1PI tadpoles and rooted bridgeless combinatorial maps. This will be investigated in future work.
\end{rem}


\newpage\subsection{Yukawa Vacuum Graphs:\quad $\left.\partial_{\phi_c}^0(\partial_{\psi_c}\partial_{\bar{\psi}_c})^0G^{\text{Yuk}}(\hbar,\phi_c,\psi_c)\right|_{\phi_c=\psi_c=0}$}

Here we interpret line 1 in Table \ref{table4}. By definition, $\left.\partial_{\phi_c}^0(\partial_{\psi_c}\partial_{\bar{\psi}_c})^0G^{\text{Yuk}}(\hbar,\phi_c,\psi_c)\right|_{\phi_c=\psi_c=0}$ generates all 1PI Yukawa graphs with no external legs. In physics jargon these are called vacuum graphs. 

\begin{figure}[h]
    \centering
    \includegraphics[scale=0.75]{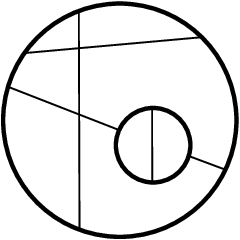}
    \caption{A 1PI vacuum graph in Yukawa theory.}
    \label{vaccc}
\end{figure}

By Lemma \ref{property1Yukawa} we know that for a vaccuum graph $\Gamma$ we will still have $|V(\Gamma)|=f$, the number of internal fermion edges. Consequently, we also still get $p=\ell(\Gamma)-1$ and $|V(\Gamma)|=2p$  (Euler's formula for the first), where $p$ is the number of internal boson edges. Let $V(x)$ be the generating series of 1PI Yukawa vacuum graphs counted by the number of boson edges.

Before giving the chord-diagrammatic interpretation notice that in Table \ref{table4} we have $$[\hbar^0]\left.\partial_{\phi_c}^0(\partial_{\psi_c}\partial_{\bar{\psi}_c})^0G^{\text{Yuk}}(\hbar,\phi_c,\psi_c)\right|_{\phi_c=\psi_c=0}=[\hbar^1]\left.\partial_{\phi_c}^0(\partial_{\psi_c}\partial_{\bar{\psi}_c})^0G^{\text{Yuk}}(\hbar,\phi_c,\psi_c)\right|_{\phi_c=\psi_c=0}=0,$$ since we do not consider the empty graph or the plain loop to be 1PI graphs.  The following proposition proves this conjecture and is a consequence of Theorem \ref{myresult1inYukawa}.

\begin{prop}\label{myresultykawavacuum}
 Let $V(x)$ be the generating series of 1PI Yukawa vacuum graphs counted by the number of boson edges. Then 
 \[V(x)=\displaystyle\frac{C(x)^2}{2x},\]
which implies that \[[\hbar^{n+1}]\left.\partial_{\phi_c}^0(\partial_{\psi_c}\partial_{\bar{\psi}_c})^0G^{\text{Yuk}}(\hbar,\phi_c,\psi_c)\right|_{\phi_c=\psi_c=0}=[x^n]\displaystyle\frac{C(x)^2}{2x}.\]
\end{prop}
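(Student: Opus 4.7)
The plan is to reduce the claim to a statement about the derivative of $V$, and then to integrate using Lemma~\ref{cd}(iii). Specifically, I will first prove via a direct bijection the identity
\[
2x\, V'(x) \;=\; \frac{C(x)}{x} - 1,
\]
and then deduce from this, together with Lemma~\ref{cd}(iii), that $V(x) = C(x)^2/(2x)$.

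\textbf{Setting up the bijection.} I set up a bijection between the pairs $(\Gamma, \vec{b})$, where $\Gamma$ is a 1PI Yukawa vacuum graph and $\vec{b}$ is an oriented boson edge of $\Gamma$, and the Yukawa 1PI tadpoles in $\mathcal{U}_{10}$ other than $\mathcal{X}$. Given $(\Gamma, \vec{b})$ with $\vec{b}$ directed from $v$ to $v'$, I subdivide the fermion edge $\mathrm{Fermion}(v)$ of $\Gamma$ by inserting a new vertex $w$ and attach a fresh external boson leg at $w$; the result is a tadpole $T$ with exactly one more boson edge than $\Gamma$. The inverse takes $T \ne \mathcal{X}$, deletes $v_T$ together with its external leg (merging the two adjacent fermion edges into a single fermion edge), and marks the boson edge at the predecessor of $v_T$ on $\mathrm{Loop}(v_T)$, oriented away from it. The main technical step, and what I expect to be the principal obstacle, is to verify that both directions of this map land inside the target class: in the forward direction, the two new fermion half-edges created by the subdivision cannot be bridges because removing one still leaves $w$ connected to the rest of $\Gamma$ through the other, and no other edge of $\Gamma$ loses its non-bridge status; in the backward direction, a would-be bridge $e$ in the resulting vacuum lifts, by tracing the fermion connections through $v_T$, to a bridge among the fermion edges at $v_T$ in $T$ itself, contradicting the 1PI-ness of $T$.

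\textbf{Closing.} Combining the bijection with Theorem~\ref{myresult1inYukawa}, which equates the tadpole count with the connected chord diagram count graded by the number of boson edges, one obtains the coefficient identity $2p V_p = C_{p+1}$ for $p\geq 1$, assembling into $2x V'(x) = C(x)/x - 1$. To finish, I differentiate the candidate $f(x) := C(x)^2/(2x)$: by the quotient rule, $f'(x) = (2x C(x) C'(x) - C(x)^2)/(2x^2)$, and substituting Lemma~\ref{cd}(iii), $2xCC' = C + C^2 - x$, simplifies this to $f'(x) = (C(x)-x)/(2x^2) = V'(x)$. Since both $V$ and $f$ vanish at $x=0$ (as $C(x) = x + O(x^2)$), they agree as formal power series, giving $V(x) = C(x)^2/(2x)$.
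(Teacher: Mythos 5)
Your proposal is correct and follows essentially the same route as the paper: the paper also points a vacuum graph (at a distinguished vertex, equivalent to your oriented boson edge since each vertex meets exactly one boson edge), identifies the pointed class with $\mathcal{U}_{10}\setminus\{\mathcal{X}\}$ to get $2x^2V'(x)=C(x)-x$ via Theorem~\ref{myresult1inYukawa}, and then integrates using Lemma~\ref{cd}(iii). Your write-up merely spells out the bridgelessness checks that the paper declares ``clear.''
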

\begin{proof}
Let $\mathcal{U}_{00}$ be the class of Yukawa 1PI vacuum graphs. It is clear that $\mathcal{U}_{10}-\{\mathcal{X}\}=X\ast\mathcal{U}_{00}^\bullet$, where $\mathcal{U}_{00}^\bullet$ is the class of vacuum graphs with a distinguished fermion edge (or equivalently, with a distinguished vertex), and $X$ as usual is used for the single constituent whose generating function is $x$ (in this case it refers to an external boson edge to be inserted). Also note that our vacuum graphs must have at least one boson edge (the plain fermion loop is not considered 1PI).

Since there are two vertices for each chosen boson edge, we have that the generating function for $\mathcal{U}_{00}^\bullet$ is 
\[2xV^\prime(x).\]
Thus, by Theorem \ref{myresult1inYukawa}, we should have 
\[C(x)-x=2x^2V^\prime(x).\] 
Then, by Lemma \ref{cd} we know that $2xCC'=C^2+C-x$, and we get
\[
  V^\prime(x)=\displaystyle\frac{1}{2x^2}(C(x)-x)
    =\displaystyle\frac{1}{2x^2}(2xC(x)C^\prime(x)-C(x)^2)
    =\displaystyle\frac{1}{2}\frac{(x(C(x)^2)^\prime-C(x)^2)}{x^2},
\]
which can then be integrated to give the result.
\end{proof}


\subsection{Yukawa Graphs from $\left.\partial_{\phi_c}^2(\partial_{\psi_c}\partial_{\bar{\psi}_c})^0G^{\text{Yuk}}(\hbar,\phi_c,\psi_c)\right|_{\phi_c=\psi_c=0}$}

The Yukawa 1PI graphs counted by loop number in line 3 of Table \ref{table4} are the graphs with exactly two external legs, each of which are boson-type. Again, for any such graph we have by Lemma \ref{property1Yukawa} and Euler's formula that $|V(\Gamma)|=f$, $p=\ell(\Gamma)-1$ and $|V(\Gamma)|=2p$;
 where $p$ is the number of internal boson edges and $f$ is the number of internal fermion edges. 
 
 By their definition, we see that these graphs are simply tadpoles with a distinguished fermion edge at which a \textit{second} external boson leg is inserted. 
 
 \begin{rem}\label{importremark}
 By the word `second' above we literally mean that the roles of the two boson edges are physically different. This will be reflected in that we will always assume that one boson leg is the `left' or `first' one. For example, the graphs 
 \begin{center}
     \raisebox{0cm}{\includegraphics[scale=0.87]{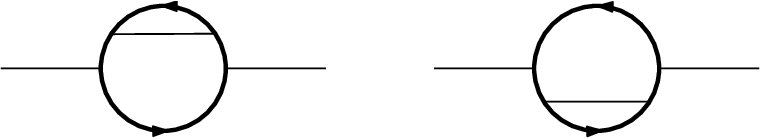}}
 \end{center}
 are considered different even though one of them can be rotated to get the other one.
 \end{rem}
 
 However, in  the process of distinguishing a fermion edge of a tadpole, we have to exclude the fermion edge immediately before the tadpole's leg vertex as it will yield the same graph if the next fermion edge is chosen instead. Thus the generating function of these graphs, counted by the number of all boson edges is given by
 \begin{equation}\label{myvacuumeq1}
     U_{20}(x)=x(2xT^\prime(x)-T(x))=x(2xC'(x)-C(x)),
 \end{equation} where $T$ is as in Theorem \ref{myresult1inYukawa}, the generating function for Yukawa 1PI tadpoles counted by the number of all boson edges. 
 \begin{figure}[h]
     \centering
     \includegraphics[scale=0.7]{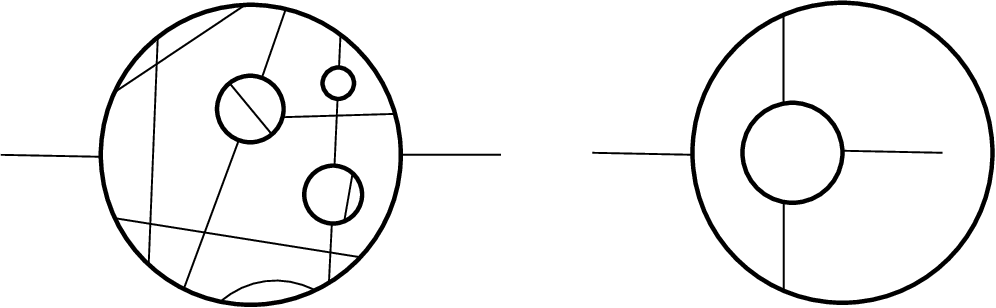}
     \caption{Two graphs generated by $\left.\partial_{\phi_c}^2(\partial_{\psi_c}\partial_{\bar{\psi}_c})^0G^{\text{Yuk}}(\hbar,\phi_c,\psi_c)\right|_{\phi_c=\psi_c=0}$.}
     \label{twoleg}
 \end{figure}
 
\begin{rem}
 Notice that the two boson legs are not necessarily on the same fermion loop, see the second graph in Figure \ref{twoleg} for example.
\end{rem}

Thus, equation \ref{myvacuumeq1} shows that 
\begin{align}\label{vacuumeq}
    [\hbar^{n}]\left.\partial_{\phi_c}^2(\partial_{\psi_c}\partial_{\bar{\psi}_c})^0G^{\text{Yuk}}(\hbar,\phi_c,\psi_c)\right|_{\phi_c=\psi_c=0}&=[x^{n+1}]U_{20}(x)=\;x(2xT^\prime(x)-T(x))\\&=[x^n](2xC'(x)-C(x)).
\end{align} Further, by Proposition \ref{myproposition2connected}, it follows that  
\begin{align}\label{myequ1}
    [\hbar^{n}]\left.\partial_{\phi_c}^2(\partial_{\psi_c}\partial_{\bar{\psi}_c})^0G^{\text{Yuk}}(\hbar,\phi_c,\psi_c)\right|_{\phi_c=\psi_c=0}&=[x^n] \;\;\displaystyle\frac{C(x)^2}{x} \;\left[\left. \displaystyle\frac{C_{\geq2}(t)}{t^2}\right|_{t=C(x)^2/x}\right], \;\text{or equivalently}\nonumber\\
    U_{20}(x)&=C(x)^2 \;\left[\left. \displaystyle\frac{C_{\geq2}(t)}{t^2}\right|_{t=C(x)^2/x}\right].
\end{align}

This equation will be useful in providing a chord-diagrammatic interpretation for graphs generated by $\left.\partial_{\phi_c}^1(\partial_{\psi_c}\partial_{\bar{\psi}_c})^1G^{\text{Yuk}}(\hbar,\phi_c,\psi_c)\right|_{\phi_c=\psi_c=0}$, as we shall see next. For the next section we will need to give a name for the class of graphs considered here, and, as it became our good habit, we shall denote it by $\mathcal{U}_{20}$.

\subsection{Yukawa Graphs from $\left.\partial_{\phi_c}^1(\partial_{\psi_c}\partial_{\bar{\psi}_c})^1G^{\text{Yuk}}(\hbar,\phi_c,\psi_c)\right|_{\phi_c=\psi_c=0}$}
These are the Yukawa 1PI graphs with vertex-type residue. Line 5 in Table \ref{table4} gives the number of these graphs, sized with loop number, up to size 5. These have two external fermion legs in addition to one boson leg (see Figure \ref{exaaample1Y}).
\begin{figure}[h]
     \centering
     \includegraphics[scale=0.6]{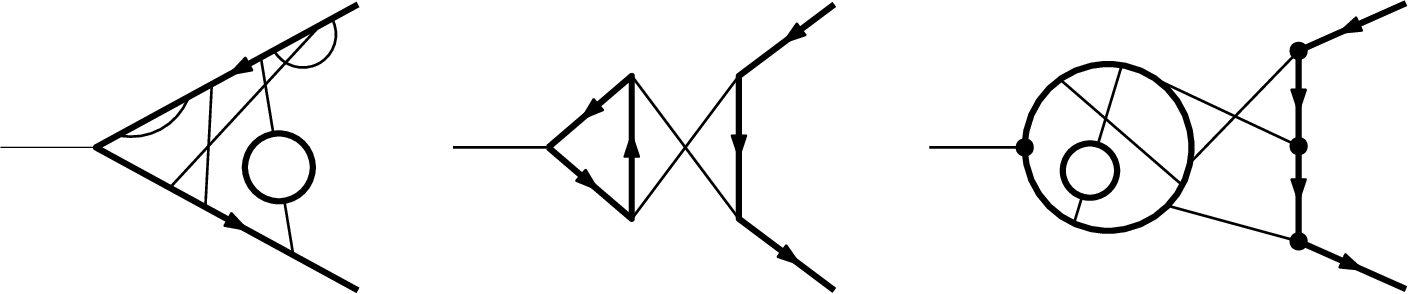}
     \caption{Examples of graphs generated by $\left.\partial_{\phi_c}^1(\partial_{\psi_c}\partial_{\bar{\psi}_c})^1G^{\text{Yuk}}(\hbar,\phi_c,\psi_c)\right|_{\phi_c=\psi_c=0}$.}
     \label{exaaample1Y}
 \end{figure}
Notice that if the ends of the two fermion external legs were identified we still wouldn't get a general tadpole. The reason for this is that, since the graph is  1PI, we can not have something like the one in Figure \ref{forboddengraph} below.
\begin{figure}[h]
    \centering
    \raisebox{0cm}{\includegraphics[scale=0.54]{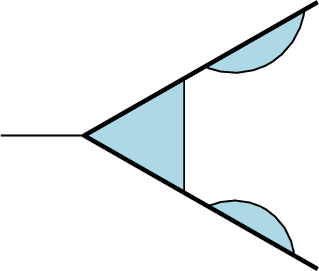}}
    
    \caption{A forbidden graph.}
    \label{forboddengraph}
\end{figure}
Knowing this is exactly the way we get the chord-diagrammatic interpretation. We will let $\mathcal{U}_{11}$ denote the class of al graphs generated by $\left.\partial_{\phi_c}^1(\partial_{\psi_c}\partial_{\bar{\psi}_c})^1G^{\text{Yuk}}(\hbar,\phi_c,\psi_c)\right|_{\phi_c=\psi_c=0}$. Before proceeding to the next theorem recall that by Lemma \ref{property1Yukawa} we have that for any $\Gamma\in\mathcal{U}_{11}$ 
$|V(\Gamma)|=f+1$, and hence by Euler's formula $p=\ell(\Gamma)$, where $f$ (and $p$) is the number of internal fermion (boson) edges. 

\begin{nota}\label{alinota1}
In representing the graphs in $\mathcal{U}_{11}$, we still stick to the counter-clockwise convention, even for the unique path of fermion edges. We shall always represent the graphs in $\mathcal{U}_{11}$ by fixing the boson external leg to the left, and then the fermion external half-edge directed towards the boson-leg vertex will be called the \textit{upper end} and will be denoted with $u_\Gamma$; on the other hand the fermion external half-edge directed away from the boson-leg vertex will be called the \textit{lower end} and will be denoted by $l_\Gamma$.
\end{nota}
\begin{center}
\raisebox{0cm}{\includegraphics[scale=0.6]{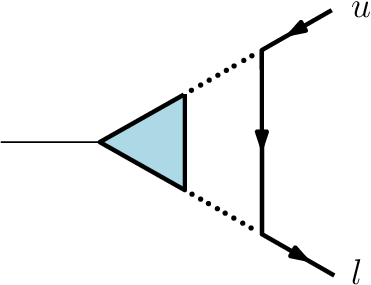}}
\end{center}

\begin{thm}
\label{mypropyukawa3}\label{ali}
 Let $\mathcal{U}_{11}$ be the class of Yukawa 1PI graphs (with vertex-type residue) generated by $\left.\partial_{\phi_c}^1(\partial_{\psi_c}\partial_{\bar{\psi}_c})^1G^{\text{Yuk}}(\hbar,\phi_c,\psi_c)\right|_{\phi_c=\psi_c=0}$, and let $U_{11}(x)$ be the generating series of these graphs, counted by the number of all boson edges. Then  
 
 \[U_{11}(x)=x\;\left. \displaystyle\frac{C_{\geq2}(t)}{t^2}\right|_{t=C(x)^2/x}.\]
\end{thm}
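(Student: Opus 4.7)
The plan is to reduce the claim to an identity that follows from the formula for $U_{20}$ derived in the preceding subsection. Applying Proposition~\ref{myproposition2connected} (which yields $C_{\geq 2}(C^2/x) = C^2/x - C$) to the right-hand side gives
\[
x\cdot\left.\frac{C_{\geq 2}(t)}{t^2}\right|_{t=C^2/x} \;=\; x\cdot\frac{C^2/x-C}{C^4/x^2} \;=\; \frac{x^2(C-x)}{C^3}.
\]
From the preceding subsection we also have $U_{20}(x) = x(C-x)/C$. Hence the target formula is equivalent to the identity $C(x)^2\cdot U_{11}(x) = x\cdot U_{20}(x)$, which, via the bijection $T\leftrightarrow C$ of Theorem~\ref{myresult1inYukawa}, becomes $T(x)^2\cdot U_{11}(x) = x\cdot U_{20}(x)$.

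I would then prove this identity by constructing a size-preserving bijection between ordered triples $(T_1,T_2,\Gamma)$ with $T_1,T_2\in\mathcal{U}_{10}$ and $\Gamma\in\mathcal{U}_{11}$, and pairs consisting of a $\mathcal{U}_{20}$-graph together with an additional boson-sized marker (to account for the extra factor of $x$). The construction grafts the tadpoles onto the fermion external ends of $\Gamma$: at the upper end $u_\Gamma$ (respectively at the lower end $l_\Gamma$) one opens the fermion loop of $T_1$ (respectively $T_2$) by removing its boson-leg vertex $v_{T_i}$ and splices the resulting pair of fermion half-edges into the fermion structure of $\Gamma$. Both fermion external half-edges of $\Gamma$ are thereby converted into internal fermion edges; two of the three surviving boson external legs become the boson externals of the resulting $\mathcal{U}_{20}$-graph, while the third provides the boson marker.

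The main obstacle will be making this grafting construction precise and verifying that it is genuinely bijective while preserving 1PI in both directions. The "forbidden" configuration noted by the author just before the theorem statement shows that 1PI excludes certain local fermion structures at the external ends, and these exclusions have to be compatible with the grafting operation. A careful structural analysis --- likely splitting on whether $v_\Gamma$ lies on the fermion path connecting $u_\Gamma$ and $l_\Gamma$ or on a separate fermion loop --- is needed to show that every $\mathcal{U}_{20}$-graph with a distinguished marker admits a unique decomposition into such a triple. Once the bijection and its size-matching are in place, the identity $T^2U_{11}=xU_{20}$ follows, and the stated closed form for $U_{11}$ is immediate from the algebraic reduction above.
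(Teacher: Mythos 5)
Your algebraic reduction is correct and coincides with the paper's: using $U_{20}(x)=x(2xC'(x)-C(x))=x(C(x)-x)/C(x)$ together with Proposition~\ref{myproposition2connected}, the claimed formula is indeed equivalent to the identity $T(x)^2\,U_{11}(x)=x\,U_{20}(x)$, and the paper proves exactly this identity via a bijection $M:\mathcal{U}_{11}\ast(\mathcal{U}_{10}\ast\mathcal{U}_{10})\longrightarrow\mathcal{X}\ast\mathcal{U}_{20}$. So your overall strategy is the intended one.

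The gap is in the combinatorial core, in two respects. First, your forward construction as described does not balance: if you ``open'' each tadpole $T_i$ by \emph{removing} its boson-leg vertex $v_{T_i}$, the boson leg attached to that vertex disappears with it, so only $\Gamma$'s boson leg survives; you cannot then speak of ``three surviving boson external legs,'' and the output would have the wrong boson count and the wrong residue. The paper's map keeps both leg vertices: it detaches the fermion edge entering $v_1$ (resp.\ the fermion edge leaving $v_2$), glues these dangling fermion ends to the external fermion half-edges $u_\Gamma$ and $l_\Gamma$, identifies $v_1$ with $v_2$ into a single vertex that momentarily carries two boson legs, and deletes one of them --- that deleted leg is precisely the $\mathcal{X}$-marker accounting for the factor $x$. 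Second, and more seriously, you explicitly defer the inverse map, which is where the real work lies: given $G\in\mathcal{U}_{20}$ one needs a canonical rule for where to cut, and the paper supplies it by searching the upper half loop (from $w_r$ to $w_l$) for the \emph{last} fermion edge whose removal disconnects $G-\mathrm{Fermion}(w_r)$, and the lower half loop for the \emph{first} such edge, with explicit fallbacks when no such edge exists. Without this cutting rule (or an equivalent one) and the verification that it inverts the grafting while preserving the 1PI property in both directions, the identity $T^2U_{11}=xU_{20}$ --- and hence the theorem --- is not yet established.
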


\begin{proof}
We start with another type of graphs, namely with the class $\mathcal{U}_{20}$ of the previous section. We will describe a bijection 
\[M:\mathcal{U}_{11}\ast (\mathcal{U}_{10}\ast\mathcal{U}_{10})\longrightarrow \mathcal{X}\ast\mathcal{U}_{20}.\]
The construction is simple:
Assume $(\Gamma,(T_1,T_2))$ is a triplet from $\mathcal{U}_{11}\ast (\mathcal{U}_{10}\ast\mathcal{U}_{10})$. Let $u$ and $l$ be the upper and lower ends of $\Gamma$ as described in Notation \ref{alinota1}. Now take the tadpoles $T_1$ and $T_2$ and do the following:
\begin{enumerate}
    \item For $T_1$, let $v_1$ be the vertex at the boson leg as before. Let $f_1$ be the fermion edge immediately before $v_1$ on Loop$(v_1)$. Detach $f_1$ from $v_1$ and denote the unique resulting graph with $u(T_1)$. This can be depicted as in Figure \ref{bell1fig} below.
    \begin{figure}[h]
        \centering
       \raisebox{0cm}{\includegraphics[scale=0.73]{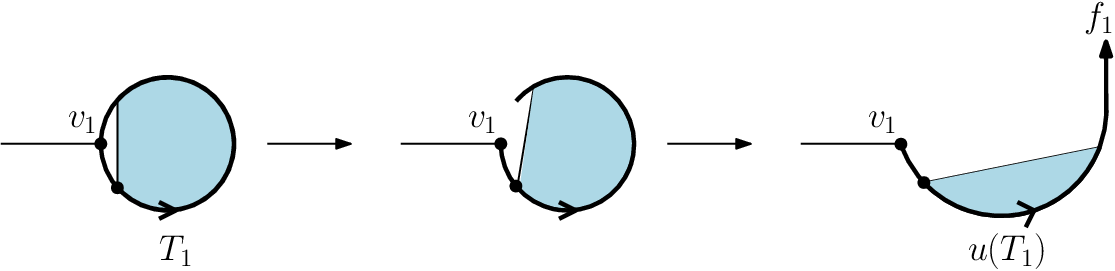}}
       \caption{$u(T_1)$}
        \label{bell1fig}
    \end{figure}
    
    \item For $T_2$, let $v_2$ be the vertex at the boson leg as before. Let $f_2$ be Fermion$(v_2)$, the fermion edge immediately next to $v_2$ on Loop$(v_2)$. Detach $f_2$ from $v_2$ and denote the unique resulting graph with $l(T_2)$. This can be depicted as in Figure \ref{bell2fig} below.
    \begin{figure}[h]
        \centering
       \raisebox{0cm}{\includegraphics[scale=0.73]{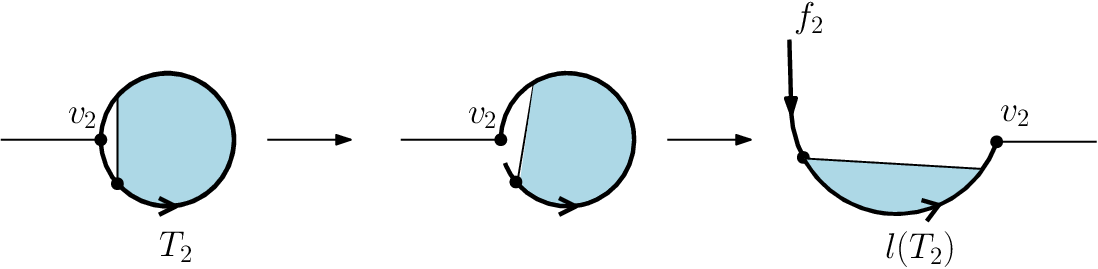}}
       \caption{$l(T_2)$}
        \label{bell2fig}
    \end{figure}
    
    \item Identify $f_1$ with $u$ of $\Gamma$ and identify $f_2$ with $l$. Denote the resulting graph with $\tilde{\Gamma}$
    \item Identify the vertices $v_1$ and $v_2$ in $\tilde{\Gamma}$.
    \item By removing one of the external boson legs we get $M(\Gamma, T_1,T_2)$.
    The process described here can be depicted as in Figure \ref{figm} below.
    \begin{figure}[h]
        \centering
       \raisebox{0cm}{\includegraphics[scale=0.7]{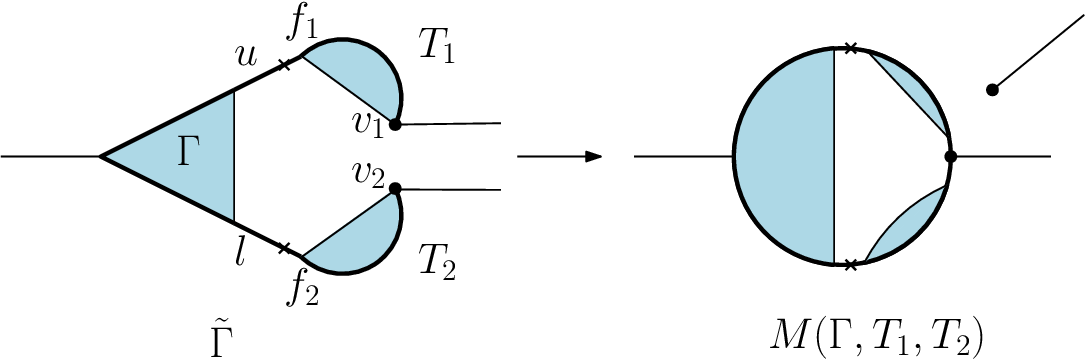}}
       \caption{$M(\Gamma, T_1,T_2)$}
        \label{figm}
    \end{figure}
\end{enumerate}

By our construction, and by the uniqueness of the representation of graphs described in Notation \ref{alinota1}, the map $M$ described this way is well-defined. Moreover, it is reversible: Assume we are given a graph $G$ in $\mathcal{U}_{20}$ in canonical representation (i.e. drawn according to our counter-clockwise convention). Remember from Remark \ref{importremark}, that the representation of $G$ is unique and one boson leg is the `left' external leg. Let $w_l$ be and $w_r$ be the corresponding vertices at the left and right  boson legs respectively. Let us refer to the half fermion loop from $w_r$ to $w_l$ by the \textit{upper half loop}, and similarly we will refer to the half fermion loop from $w_l$ to $w_r$ by the \textit{lower half loop} (see Figure \ref{notaM}).  Now we do the following:
\begin{enumerate}
    \item Starting from $w_r$, remove Fermion$(w_r)$ from the graph $G$, and move along the rest of the upper half loop (direction is as before) searching for the \underline{\textbf{last}} fermion edge whose removal disconnects $G-$Fermion$(w_r)$. Stop the search at $w_l$. Notice that it may happen that no such edge exists. If found, denote this edge by $f_1$. 
    If such an edge doesn't exist we set $f_1=$ Fermion$(w_r)$.  
    
    \item Similarly, starting from $w_l$, remove Fermion$(w_l)$ from the graph $G$, and move along the rest of the lower half loop searching for the \underline{\textbf{first}} fermion edge whose removal disconnects $G-$Fermion$(w_l)$. Stop the search at $w_r$. Notice that it may happen that no such edge exists. If found, denote this edge by $f_2$. 
    If such an edge doesn't exist we set $f_2 =$ (fermion edge immediately before $w_r$).
    
    \item Cut at $f_1$ and $f_2$, and identify their ends with $w_r$ such that the direction of $f_1$ with respect to $w_r$ is counter-clockwise, whereas the direction of $f_2$ with respect to $w_r$ is to be made clockwise.
    
    \item Obtain $T_1$ and $T_2$ by splitting $w_r$ and its boson leg into two copies. The remnant of $G$ is $\Gamma$. 
    
    \begin{figure}[h]
        \centering
       \raisebox{0cm}{\includegraphics[scale=0.577]{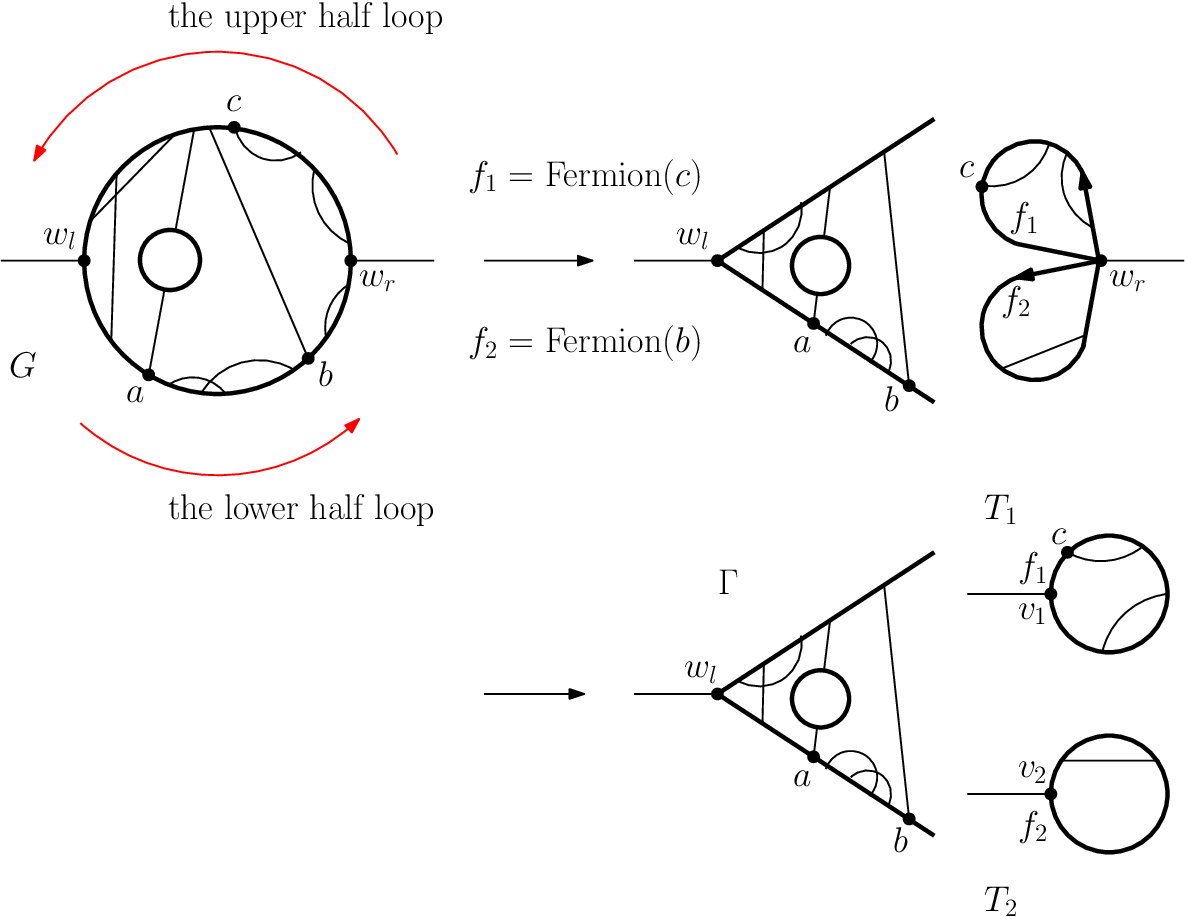}}
       \caption{Calculating $M^{-1}(G)$.}
        \label{notaM}
    \end{figure}
\end{enumerate}

It is worth noting that in Figure \ref{notaM} $f_2$ is Fermion$(b)$ and is not Fermion$(a)$ as can be checked using the definition.

Thus, the map $M:\mathcal{U}_{11}\ast (\mathcal{U}_{10}\ast\mathcal{U}_{10})\longrightarrow \mathcal{X}\ast\mathcal{U}_{20}$ is a bijection. Consequently, on the level of generating functions we will have 
\[
    U_{11}(x) T(x)^2= x \;U_{20}(x),
\]

and hence, by equation (\ref{myequ1}) and Theorem \ref{myresult1inYukawa}, we get  
\[U_{11}(x) C(x)^2=x\;C(x)^2 \;\left[\left. \displaystyle\frac{C_{\geq2}(t)}{t^2}\right|_{t=C(x)^2/x}\right],\]
that is,
\begin{equation}\label{myeq2}
    U_{11}(x)=\left. \displaystyle\frac{C_{\geq2}(t)}{t^2}\right|_{t=C(x)^2/x}, 
\end{equation}
as desired. 
\end{proof}

\subsection{Yukawa  Graphs from $\left.\partial_{\phi_c}^0(\partial_{\psi_c}\partial_{\bar{\psi}_c})^1G^{\text{Yuk}}(\hbar,\phi_c,\psi_c)\right|_{\phi_c=\psi_c=0}$}

Let $\mathcal{U}_{01}$ be the class of Yukawa 1PI graphs generated by $\left.\partial_{\phi_c}^0(\partial_{\psi_c}\partial_{\bar{\psi}_c})^1G^{\text{Yuk}}(\hbar,\phi_c,\psi_c)\right|_{\phi_c=\psi_c=0}$. Line 4 in Table \ref{table4} gives the number of these graphs, sized with loop number, up to size 5. These are graphs with two external fermion legs and with no  boson leg. 

 For any $\Gamma\in\mathcal{U}_{01}$, we have by Lemma \ref{property1Yukawa} that
$|V(\Gamma)|=f+1$, and hence by Euler's formula $p=\ell(\Gamma)$, where $f$ (and $p$) is the number of internal fermion (boson) edges. In particular, 
\begin{equation}\label{numinternalfermion}
    f=2p-1.
\end{equation}
We let $U_{01}(x)$ be the generating function of the graphs in $\mathcal{U}_{01}$ counted by the number of boson edges.

There are two ways to think about this type of graphs, both can be used to obtain $U_{01}(x)$:
\begin{enumerate}
    \item It is clear that the graphs in $\mathcal{U}_{11}$, which were considered in the last section, are the rooted versions of the graphs in $\mathcal{U}_{01}$, namely,  except for the one-vertex graph, every graph in $\mathcal{U}_{11}$  is obtained from a unique graph in $\mathcal{U}_{01}$ by distinguishing an internal fermion edge and inserting a boson leg. This means that, on the level of generating functions:

\begin{equation}\label{letsdoit}
    U_{11}(x)=x\;(2x\;U_{01}^\prime(x)-U_{01}(x)+1),
\end{equation}where the RHS uses the fact that $f=2p-1$ by equation (\ref{numinternalfermion}), and where the $1$ corresponds to the one-vertex graph graph in $\mathcal{U}_{11}$.

\item The other way is to think of a tadpole as being constructed from a list of graphs from $\mathcal{U}_{01}$. This way is more direct and we shall discuss it below.
\end{enumerate}

\begin{prop}
 A Yukawa 1PI tadpole graph can be decomposed as  boson leg together with a list of graphs from $\mathcal{U}_{01}$. In particular, on the level of generating functions we will have
 \begin{equation}\label{tobeusd}
     T(x)=\displaystyle\frac{x}{1-U_{01}(x)}.
 \end{equation}\label{tobeused2}
 This in turn implies that \begin{equation}
     U_{01}(x)=2xC^\prime(x)-C(x)=C(x)^2 \;\left[\left. \displaystyle\frac{C_{\geq2}(t)}{t^2}\right|_{t=C(x)^2/x}\right]=U_{20}(x).
 \end{equation}
\end{prop}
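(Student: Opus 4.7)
The plan is to prove the stated combinatorial decomposition first, read off the functional equation $T(x) = x/(1 - U_{01}(x))$, and then derive the closed-form expressions for $U_{01}(x)$ by combining this with the results already established for $C(x)$.

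For the decomposition, I would take a tadpole $T \in \mathcal{U}_{10}$, locate the unique boson-leg vertex $v_T$, and delete both $v_T$ and its external boson half-edge. The two fermion edges formerly incident to $v_T$ become external fermion half-edges, so the resulting graph $T^{\circ}$ has exactly two external fermion legs and no external boson leg. The structural claim is that $T^{\circ}$ decomposes uniquely as a linear chain of 1PI blocks $(G_1, G_2, \ldots, G_k)$, $k \geq 0$, joined by single fermion bridge edges, with the two external fermion legs of $T^{\circ}$ attached to $G_1$ and $G_k$ respectively, and each $G_i \in \mathcal{U}_{01}$. The case $k = 0$ corresponds to $T = \mathcal{X}$. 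In the inverse direction, given a sequence $(G_1, \ldots, G_k)$ I would identify consecutive external fermion half-edges to form a linear chain and close the chain by attaching its two unmatched ends to a new vertex $v_T$ equipped with a boson leg; the resulting graph is 1PI because removing any internal edge of some $G_i$ leaves $G_i$ connected (so the whole graph stays connected), while removing any of the loop-closing fermion edges merely opens the cycle into a connected path.

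Granting this bijection, counting by total boson edges yields
\[
T(x) \;=\; x\bigl(1 + U_{01}(x) + U_{01}(x)^2 + \cdots\bigr) \;=\; \frac{x}{1 - U_{01}(x)},
\]
where the prefactor $x$ accounts for the boson leg at $v_T$ (and no extra $x$ is needed for the chain-closing fermion edges, since those arise from identifying existing external half-edges rather than being inserted afresh). Inverting and using Theorem \ref{myresult1inYukawa} to replace $T(x)$ by $C(x)$ gives $U_{01}(x) = 1 - x/C(x)$. Dividing the identity $2xCC' = C^2 + C - x$ from Lemma \ref{cd}(iii) by $C$ yields $2xC'(x) - C(x) = 1 - x/C(x)$, hence $U_{01}(x) = 2xC'(x) - C(x)$. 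The remaining identification with $C(x)^2 \left.\tfrac{C_{\geq 2}(t)}{t^2}\right|_{t=C^2/x}$ follows by substituting $t = C^2/x$ into Proposition \ref{myproposition2connected} to obtain $C_{\geq 2}(C^2/x) = C^2/x - C$, rearranging to $\left. C^2 \cdot \tfrac{C_{\geq 2}(t)}{t^2}\right|_{t=C^2/x} = x - x^2/C = x\, U_{01}$, and matching this against equations (\ref{myvacuumeq1}) and (\ref{myequ1}) to identify with $U_{20}(x)$.

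The main obstacle is the combinatorial step: rigorously showing that $T^{\circ}$ really has the linear chain-of-blocks structure. Concretely, one needs to verify that the $2$-edge-connected components of $T^{\circ}$ form a path in its block tree with the two external fermion legs attached to the endpoint blocks, and that each block carries exactly two external fermion legs --- either two bridge ends for an interior block, or one bridge end together with one original external leg for an endpoint block. This is a structural consequence of $T$ being 1PI together with the fact that $v_T$ lies on a unique fermion loop and has only two fermion connections to the rest of the graph; once the chain structure is established, all the generating-function identities are formal consequences.
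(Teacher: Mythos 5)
Your proposal is correct and follows essentially the same route as the paper: both decompose the tadpole into its boson leg plus the maximal 1PI blocks strung along the fermion loop through $v_T$ (your deletion of $v_T$ followed by the block-tree/bridge analysis of $T^{\circ}$ is the same decomposition the paper carries out by cutting at the fermion edges of Loop$(v_T)$ whose removal disconnects $u(T)$), and both then combine $T=C$ with Lemma \ref{cd}(iii) and Proposition \ref{myproposition2connected} to get the closed forms. One remark: your explicit substitution correctly gives $\left. C^2\,C_{\geq2}(t)/t^2\right|_{t=C^2/x}=x-x^2/C=x\,U_{01}(x)$, which equals $U_{20}$ as defined in equation (\ref{myvacuumeq1}) but differs by a factor of $x$ from the literal chain of equalities in the statement; this is a grading mismatch (number of all boson edges versus loop number) already present in the paper, whose own proof merely asserts the last equality without computing it, so your more explicit version actually exposes the bookkeeping issue rather than containing an error.
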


 \begin{proof}
 Given a tadpole $T\in\mathcal{U}_{10}$, we will give a unique decomposition into a boson leg together with  a list of graphs from $\mathcal{U}_{01}$. Let $v_T$ be the vertex at the boson external leg of $T$ as usual. Also let $f_0$ be the fermion edge on Loop$(v_T)$ immediately before $v_T$. 
 
 \begin{enumerate}
     \item If $f_0=$ Fermion$(v_T)$, return $(\raisebox{-0.cm}{\raisebox{-0.05cm}{\includegraphics[scale=0.5]{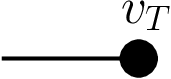}}};\varnothing)$.
     \item  Otherwise, detach $f_0$ and call the resulting graph $u(T)$. \item Starting from $v_T$, move along Loop$(v_T)$ and determine all fermion edges on Loop$(v_T)$ whose removal disconnects $u(T)$. Let this list of fermion edges be 
     \[f_1,f_2,\ldots,f_n.\]
     (Note that it will always be the case that $f_1=$ Fermion$(v_T)$).
     \item For $1\leq i\leq n-1$ define $G_i$ to be the graph obtained from the original $T$ by 
     
     \begin{itemize}
         \item cutting at $f_i$ and $f_{i+1}$, and
         \item deleting the component that contains $v_T$.
     \end{itemize}
     
     \item For $i=n$, define $G_n$ to be the graph obtained from the original $T$ by 
     
     \begin{itemize}
         \item cutting at $f_n$ and $f_{0}$, and
         \item deleting the component that contains $v_T$.
     \end{itemize}
     \item Return $(\raisebox{-0.05cm}{\includegraphics[scale=0.5]{Figures/vt.eps}};G_1,G_2,\ldots,G_n)$. 
 \end{enumerate}
 
 It is easily seen that the graphs $G_i$ are 2-edge connected, in fact the graphs $G_i$ are maximal 1PI's inserted along Loop$(v_T)$. Besides, by their definition, the graphs $G_i$ have a fermion-type residue and are therefore in $\mathcal{U}_{01}$. Moreover, this construction is clearly unique for every tadpole in  $\mathcal{U}_{10}$. 
 
 Conversely, every such list can be uniquely used to produce a tadpole. Figure \ref{Gs} below illustrates two cases of this decomposition.
 \begin{figure}[h]
        \centering
       \raisebox{0cm}{\includegraphics[scale=0.68]{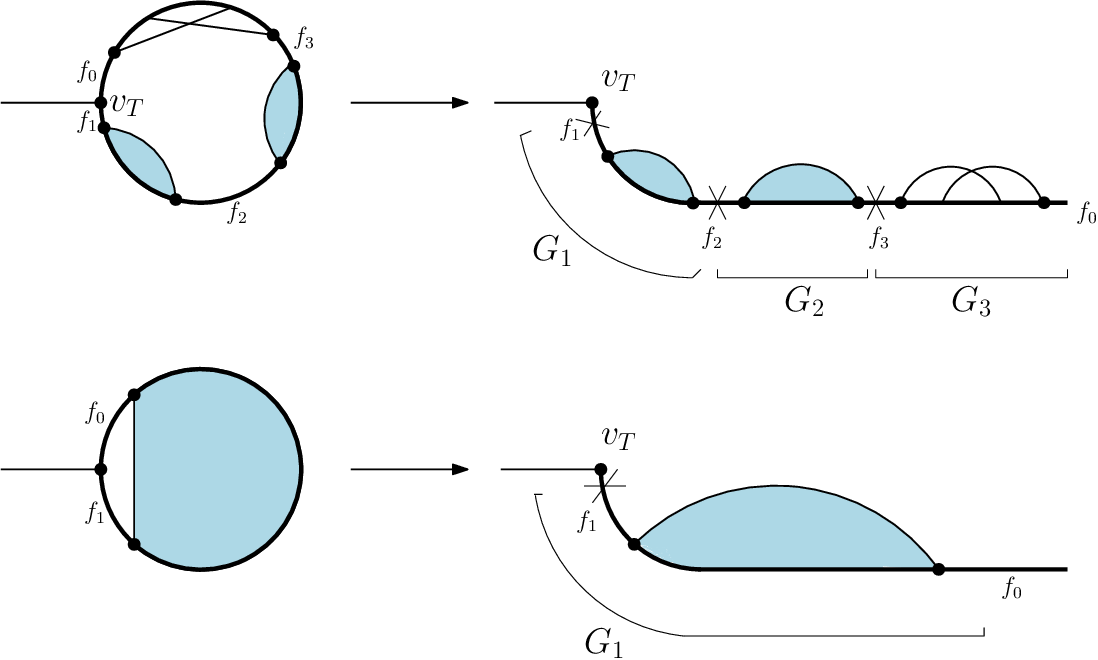}}
       \caption{Examples of the decomposition of tadpoles into a list of graphs from $\mathcal{U}_{01}$.}
        \label{Gs}
    \end{figure}

This gives that, on the level of generating functions,  
  \[T(x)=\displaystyle\frac{x}{1-U_{01}(x)}.\]
 
 By Lemma \ref{cd} recall that $C(x)=\frac{x}{1-(2xC'(x)-C(x))}$, and since we have $T(x)=C(x)$ we get 
     \[U_{01}(x)=2xC^\prime(x)-C(x).\]
 Then it follows from Proposition \ref{myproposition2connected} that  
     
     \[U_{01}(x)=C(x)^2 \;\left[\left. \displaystyle\frac{C_{\geq2}(t)}{t^2}\right|_{t=C(x)^2/x}\right]=U_{20}(x).\]
 \end{proof}


\textbf{Conclusion:} We have thus obtained a chord-diagrammatic interpretation for a number of proper Green functions in Yukawa theorey and quenched QED. In \cite{michiq}, the asymptotics of these generating series are obtained by means of singularity analysis. Here, having obtained every generating series in terms of connected chord diagrams, the task of obtaining the asymptotics is readily done and is a straightforward calculation now. Indeed, we only need to use our knowledge of $\mathcal{A}_{\frac{1}{2}}^2C(x)$ (and $\mathcal{A}_{\frac{1}{2}}^2C_{\geq2}(x)$) to obtain the asymptotics for the different Yukawa and QQED green functions considered here.

\bibliographystyle{plain}
\bibliography{References11.bib}

\begin{thebibliography}{10}

\bibitem{michi1}
M.~Borinsky.
\newblock {\em Generating asymptotics for factorially divergent sequences}.
\newblock arXiv preprint arXiv:1603.01236, 2016.

\bibitem{michiq}
M.~Borinsky.
\newblock {\em Renormalized asymptotic enumeration of Feynman diagrams}.
\newblock Annals Phys. 385 (2017) 95-135, DOI: 10.1016/j.aop.2017.07.009, 2017.

\bibitem{michi}
M.~Borinsky.
\newblock {\em Graphs in perturbation theory: Algebraic structure and
  asymptotics}.
\newblock arXiv:1807.02046, 2018.

\bibitem{michilattice}
M.~Borinsky.
\newblock {\em Algebraic lattices in QFT renormalization}.
\newblock Letters in Mathematical Physics, Volume 106, Issue 7, pp 879-911,
  arXiv:1509.01862, July 2016,.

\bibitem{conneskreimer}
A.~Connes and D.~Kreimer.
\newblock {\em Renormalization in quantum field theory and the
  Riemann–Hilbert problem i: The Hopf algebra structure of graphs and the
  main theorem}.
\newblock Communications in Mathematical Physics, 210:249–273, 2000.

\bibitem{con}
J.~Courtiel, K.~Yeats, and N.~Zeilberger.
\newblock {\em Connected chord diagrams and bridgeless maps}.
\newblock arXiv:1611.04611, 2017.

\bibitem{numandweights}
P.~Cvitanovi\'c, B.~Lautrup, and R.~B. Pearson.
\newblock {\em Number and weights of Feynman diagrams}.
\newblock Phys. Rev. D Vol. 18 (6), 1978.

\bibitem{flajoletchords}
P.~Flajolet and M.~Noy.
\newblock {\em Formal Power Series and Algebraic Combinatorics: 12th
  International Conference, FPSAC’00, Moscow, Russia, June 2000,
  Proceedings}.
\newblock Journal of Algorithms, pp. 191–201, Berlin, Heidelberg: Springer
  Berlin Heidelberg, 2000.

\bibitem{kjm2}
D.~M. Jackson, A.~Kempf, and A.~Morales.
\newblock {\em Algebraic Combinatorial fourier and Legendre Transforms with
  Application in Perturbative Quantum Field Theory}.
\newblock arXiv:1805.09812v3, 2019.

\bibitem{alipaper2con}
A.~A. Mahmoud.
\newblock {\em An Asymptotic Expansion for the Number of 2-Connected Chord
  Diagrams}.
\newblock Journal of Mathematical Physics (Vol.64, Issue 12), DOI:
  10.1063/5.0171074, 2023.

\bibitem{manchonhopf}
D.~Manchon.
\newblock {\em Hopf algebras, from basics to applications to renormalization}.
\newblock arXiv preprint math/0408405, 2004.

\end{thebibliography}

\end{document}